\title{Ideal Attribution and Faithful Watermarks\\ for Language Models}
\author[a]{Min Jae Song}
\author[b]{Kameron Shahabi}
\affil[a]{Data Science Institute,
University of Chicago}
\affil[b]{Paul G.~Allen School of Computer Science \& Engineering,
University of Washington}
\begin{document}
\maketitle

\begin{abstract}
We introduce ideal attribution mechanisms, a formal abstraction for reasoning about attribution decisions over strings.
At the core of this abstraction lies the ledger, an append-only log of the prompt--response interaction history between a model and its user. Each mechanism produces deterministic decisions based on the ledger and an explicit selection criterion, making it well-suited to serve as a ground truth for attribution. We frame the design goal of watermarking schemes as faithful representation of ideal attribution mechanisms. This novel perspective brings conceptual clarity, replacing piecemeal probabilistic statements with a unified language for stating the guarantees of each scheme. It also enables precise reasoning about desiderata for future watermarking schemes, even when no current construction achieves them, since the ideal functionalities are specified first. In this way, the framework provides a roadmap that clarifies which guarantees are attainable in an idealized setting and worth pursuing in practice.
\end{abstract}

\section{Introduction}
\label{sec:intro}

Large language models (LLMs) are now routinely used for text generation, with applications ranging from composing emails to code generation and, to some degree, even mathematical proofs~\cite{chatterji2025people,orabona2025new,diez2025mathematical,feldman2025godel}. As LLMs improve in fluency, coherence, domain expertise, and reasoning capability~\cite{comanici2025gemini}, it becomes increasingly difficult to distinguish machine-generated text from human text.
The scale of this proliferation is unprecedented; for the first time in history, large volumes of ``human-grade'' text are being produced by non-human processes.

How this shift will reshape the data ecosystem and its downstream products like LLMs, for better or worse, remains uncertain. What is clear, however, is that without provenance mechanisms in place, the shift is likely irreversible; once synthetic and human text are mixed without reliable markers of non-human origin, distinguishing between them may be nearly impossible~\cite{kobis2021artificial,clark2021all,jakesch2023human}. 
The development of provenance mechanisms is therefore urgent, not as a final solution, but as a necessary first step toward preserving the integrity of the current data ecosystem, providing time for the community to assess how it ought to be steered.

This work focuses on attribution and watermarking. Attribution concerns tracing text back to its origin, i.e., its ``owner''. Watermarking, a major focus of recent research, offers practical solutions to this attribution problem~\cite{aaronson2022my,kirchenbauer2023watermark,kuditipudi2023robust,christ2024undetectable,dathathri2024scalable}. When the owner is an LLM service provider, attribution implies the ability to distinguish the provider’s text from text produced by other generation processes. In this sense, attribution constitutes a refinement of the problem of distinguishing human and AI-generated text. For an LLM service provider, it subsumes the coarser human--AI distinction and imposes the stricter requirement of distinguishing its own text from that produced by other sources.

We add that attribution may also serve the business interests of LLM providers by substantiating ownership claims. For example, it can act as a mechanism to deter unauthorized model distillation, a major focus of recent disputes over DeepSeek~\cite{metz2025openai,mok2025deepseek}. More broadly, attribution enables providers to monitor how their models are used in the wild, detect cases involving misinformation or harmful content, and intervene when necessary. In these contexts, attribution mechanisms can provide evidence that complements the distinctive behavioral traits of deployed models.

\paragraph{Ideal attribution mechanisms.}
Our main contribution is the introduction of \emph{ideal attribution mechanisms}, a formal abstraction for reasoning about attribution decisions over strings. Their main functionality is to answer questions such as, ``given the full generation history of an LLM provider, is string $y$ attributable to that provider?''
At the core of this abstraction lies the notion of a \emph{ledger}, an append-only log of the prompt--response interaction history between a model and its user. Each mechanism outputs deterministic decisions based on the ledger and an explicit selection rule, making it well-suited to serve as a ground truth for attribution.

Our key novelty is to frame the design goal of watermarking schemes as faithful representation of ideal attribution mechanisms. In line with this framing, we formulate watermarking guarantees with respect to their ideal counterparts. This perspective brings conceptual clarity, replacing piecemeal probabilistic statements with a unified language for stating the guarantees of each scheme. It reduces confusion in interpretation and enables straightforward comparison of guarantees across schemes. For example, the semantics of a watermark detector’s decisions, i.e., what it means for text to be watermarked under a given scheme, can be communicated directly by specifying the target ideal attribution mechanism.
Because an ideal attribution mechanism is explicit and deterministic by definition, it provides a complete specification of which responses are deemed attributable under any generation history, enabling transparent evaluation of whether attribution decisions from the mechanism and its corresponding watermarking scheme are normatively plausible.

Our framework departs from previous work on watermarking language models, which typically states guarantees for watermarking schemes without reference to a target ideal functionality, i.e., a ground truth~\cite{aaronson2022my,kirchenbauer2023watermark,kuditipudi2023robust,christ2024undetectable,dathathri2024scalable,fairoze2023publicly}.
This practice has often led to ambiguity, misinterpretation, and fragmentation, with watermarking guarantees formulated differently across papers. One example is the subtle but important distinction between adaptive and non-adaptive robustness, which went largely unnoticed until it was explicitly raised by Cohen et al.~\cite{cohen2024watermarking}.
Another example is distortion-freeness, the guarantee that a watermarking scheme preserves the distribution of the original LLM, and thus its perceived quality. While single-query distortion-freeness can be ensured statistically (see, e.g.,~\cite{kuditipudi2023robust}), such guarantees give a false sense of security over quality preservation since users typically interact with LLMs over multiple rounds. The more relevant guarantee is multi-query distortion-freeness, which is subsumed by the notion of watermark undetectability introduced by Christ et al.~\cite{christ2024undetectable}.

\paragraph{Watermarking schemes.}

Watermarking is the practice of embedding a signal into content before its release so that the content can later be attributed to its origin. In this work we focus on schemes for language models that realize this by coupling generated responses with a random \emph{watermark key}. Such schemes intervene during autoregressive token sampling, exploiting the fact that language models are randomized algorithms with explicit next-token probabilities. These probabilities are deterministic, while the sampled tokens are random, a clean separation between explicit forecast and realized outcome that stands in contrast to human language generation.

Watermarking schemes aim to satisfy three main desiderata:
\begin{itemize}
    \item \textbf{Distortion-freeness:} The scheme should ensure that the output distribution of the watermarked model matches that of the original model, thereby preserving quality.
    \item \textbf{Robustness:} The watermark should remain detectable even after mild modifications to the watermarked output.
    \item \textbf{Soundness:} The detector should not flag content that is not attributable to the watermarked model.
\end{itemize}

These high-level properties are often formalized in different ways.
Distortion-freeness, for instance, has been formalized as watermark undetectability~\cite{christ2024undetectable}, i.e., computational indistinguishability between the watermarked model and the original model, $k$-query distortion-freeness~\cite{kuditipudi2023robust}, and heuristic checks based on manual inspection.
Robustness likewise has been formalized in several ways, including non-adaptive robustness, where the perturbation adversary cannot observe previously sampled watermarked text; adaptive robustness~\cite{cohen2024watermarking}, where perturbations may depend on previous samples; and ideal security, where the adversary even has black-box access to the detector~\cite{alrabiah2024ideal}. Definitions also vary with respect to the class of allowed perturbations, ranging from those that yield sufficiently long uncorrupted substrings~\cite{aaronson2022my,kirchenbauer2023watermark,fairoze2023publicly} to edit-distance perturbations~\cite{kuditipudi2023robust,golowich2024edit}.

For soundness, however, there is broad consensus, with prior work defaulting to the requirement that the detector not flag content generated independently of the watermark key. We view this formulation as too narrow since users cannot realistically be assumed to have zero contact with watermarked models in an LLM-saturated environment.

\subsection{Main contributions} 
\label{sec:main-contributions}

This work makes conceptual and theoretical contributions to the study of attribution and watermarking for language models. Our core contribution is the introduction of ideal attribution mechanisms, an abstraction that provides a principled ground truth for watermarking schemes. Building on this foundation, we clarify desiderata for attribution, develop a unified language for expressing watermarking guarantees, and formalize properties not yet achieved by existing schemes, thereby charting a roadmap for future constructions.

\paragraph{Formalization of ideal attribution (\refSc{sec:ideal-attribution}).}  

We define an ideal attribution mechanism as a sequence of time-indexed Boolean functions $(\Attr_t)$ over strings. For each time $t$, $\Attr_t: \{0,1\}^* \to \{0,1\}$ represents the ground-truth attribution decision with respect to the ledger $\Pi_t$, an append-only record of all prompt--response interactions up to time $t$. In particular, $\Attr_t(\zeta) = 1$ indicates that the string $\zeta \in \{0,1\}^*$ is attributable to the model provider associated with $\Pi_t$.

Each attribution function $\Attr_t$ is determined by $\Pi_t$ and a transcript-level attribution map
\begin{align*}
 \trAtt : \{0,1\}^* \times \{0,1\}^* \to 2^{\{0,1\}^*}\;,
\end{align*}
which is required to satisfy a set of axioms. A \emph{transcript} consists of a single prompt--response pair $(x,u)$, and $\trAtt(x,u)$ denotes the set of strings attributable to the response $u$ under prompt $x$.

This formalization enables transparent reasoning about ownership claims implicit in attributability (\refSc{sec:attribution-soundness}) and provides a principled benchmark for evaluating watermarking schemes.

\paragraph{Conceptual framework for watermarking (\refSc{sec:watermarking-scheme}).}  
We frame watermarking schemes as \emph{faithful} representations of ideal attribution mechanisms. Building on the notion of watermark undetectability, we define what it means for a scheme to be both undetectable and faithful.

\begin{definition}[Informal version of~\refDef{def:watermarking-faithful}]
\label{def:watermarking-faithful-informal}
Let $Q$ be a language model and let $\trAtt = (\trAtt_\lambda)_{\lambda \in \NN}$ be a sequence of transcript-level attribution maps, where $\lambda \in \NN$ is the security parameter.\footnote{The \emph{security parameter} $\lambda$ is a scaling knob that sets the complexity of cryptographic schemes (e.g., key sizes), and ensures that any attack running in time polynomial in $\lambda$ succeeds with only negligible probability.}
An \emph{undetectable} and \emph{faithful} watermarking scheme for $Q$ and $\trAtt$ is a triple of probabilistic polynomial-time (PPT) algorithms $(\Gen,\Wat,\Ver)$ satisfying:

\begin{description}    
    \item[Undetectability:] For any PPT distinguisher $\cA$,
    \begin{align*}
        \Big|\Pr[\cA^{Q,\Response}(1^\lambda) = 1] -  \Pr[\cA^{Q,\Wat_\wk}(1^\lambda)=1]\Big| \le \negl(\lambda)\;,
    \end{align*}
    where $\bar{Q}$ is the autoregressive sampling oracle for $Q$ and $\Wat_\wk$ is the sampling oracle for the watermarked model with key $\wk$.

    \item[Faithfulness:] For any PPT adversary $\cA$ that outputs a string,
    \begin{align*}
    \Pr\left[
    \begin{array}{rl}
        \wk &\leftarrow \Gen(1^\lambda)\\
        \zeta &\leftarrow \cA^{Q,\Wat_\wk,\Ver_\wk}(1^\lambda)
    \end{array} 
    :
    \Ver_\wk(\zeta) \neq \Attr(\zeta)
    \right] \le \negl(\lambda)\;,
    \end{align*}
    where $\wk$ is the watermark key, and $\Attr$ denotes the ideal attribution function at the time of the query $\Ver_\wk(\zeta)$.
\end{description}
\end{definition}

This framework provides a unified language for expressing watermarking guarantees in terms of the target ideal attribution mechanism. When $\trAtt$ is taken to be a robust attribution map (for some chosen notion of robustness), it yields a clean decoupling of the three main watermarking desiderata:
\begin{align*}
    \textbf{Distortion-freeness} \quad &\equiv \quad \text{Undetectability}\;, \\
    \textbf{Robustness} \quad &\equiv \quad \text{Control over false negatives: } \Ver_\wk(\zeta) < \Attr(\zeta)\;, \\
    \textbf{Soundness} \quad &\equiv \quad \text{Control over false positives: } \Ver_\wk(\zeta) > \Attr(\zeta)\;.
\end{align*}

Thanks to ideal attribution functions, false positives and false negatives admit concrete semantics. 
Control over false negatives corresponds to robustness; a scheme is adaptively robust if no PPT adversary $\cA$ can produce such witnesses, while non-adaptive robustness can be expressed by requiring false negative control over $\cA$ restricted to a \emph{single} sampling query to $\Wat_\wk$. Control over false positives yields a refined notion of soundness. 
In previous work, the default form of soundness corresponds to assuming independence from the watermark key. 
In our framework, this means requiring false positive control only against adversaries with no prior interaction with $\Wat_\wk$.

To demonstrate that our definition of undetectable and faithful schemes is not vacuous, we reinterpret the security guarantees of ideal pseudorandom codes (PRCs)~\cite{alrabiah2024ideal} in terms of ideal attribution. In particular, we show that ideal PRCs satisfy the definition with respect to the uniform language model and a specific ideal mechanism (implicit in the ``Ideal World'' of Alrabiah et al.~\cite{alrabiah2024ideal}). 
We then examine to what extent generic undetectable watermarking schemes based on embedding PRCs achieve faithfulness when applied to arbitrary language models $Q$.

\paragraph{Anticipating future watermarking schemes (\refSc{sec:beyond-black-box}).}  
Our framework suggests a design paradigm, standard in cryptography: first specify the ideal functionality, and then design practical schemes to realize it, while making explicit how the achieved guarantees fall short of the ideal.

This ideal-first perspective enables reasoning about desiderata that no existing watermarking scheme achieves. 
As an illustrative case, we study unforgeability against adversaries with white box access to the detector. Prior work has considered only restricted forms of unforgeability, which are incompatible with the level of robustness required for useful watermarking. We sketch more general notions within our framework and conjecture that cryptographic primitives achieving these notions are technically feasible.
Beyond its theoretical value, unforgeability could also serve the business interests of LLM service providers, since the presence of an unforgeable watermark constitutes strong, non-repudiable evidence of origin and would lend substantial weight to claims of model distillation.

\subsection{Related work}
\label{sec:related-work}

\paragraph{Watermarking schemes for language models.} Early work on text watermarking~\cite{topkara2005natural,atallah2001natural,atallah2002natural} focused on embedding signals into \emph{fixed} objects in a way that is hard to remove without altering their perceived quality. Recent approaches treat language models as \emph{randomized sampling algorithms}~\cite{aaronson2022my,kirchenbauer2023watermark,kirchenbauer2023reliability,kuditipudi2023robust,christ2024undetectable,zhao2023provable}, and embed watermarks by intervening in the sampling process, leveraging access to the original model’s next-token probabilities. The intervention induces statistical correlations between the model’s responses and a random watermark key fixed at initialization, enabling detection later from the response alone. For a broader overview of this emerging area, we refer the reader to the surveys~\cite{liu2024survey,zhao2024sok}.

This work focuses on watermarking schemes that are undetectable and robust~\cite{christ2024undetectable}. 
Among existing undetectable schemes~\cite{christ2024pseudorandom,cohen2024watermarking,golowich2024edit,fairoze2023publicly}, the most relevant to our work is ideal pseudorandom codes (PRC)~\cite{alrabiah2024ideal}. The notion of an \emph{ideal attribution mechanism} is inspired by the ideal/real world paradigm of cryptographic security definitions, also made explicit in the formulation of ideal PRCs. In our view, pseudorandomness pertains to the distribution of the ledger rather than imperfections of the verifier in representing the target attribution mechanism. This perspective underlies our approach, which treats \emph{undetectability} and \emph{faithfulness} as formally distinct properties of watermarking schemes.

Another closely related work is that of Cohen et al.~\cite{cohen2024watermarking}, who introduce a conceptual framework for clarifying robustness guarantees in language model watermarking. Their AEB framework defines robustness through predicates on ``blocks'', where attribution depends on recovering enough blocks (or their approximations) from a candidate string. This yields a piecewise formulation in which robustness is expressed through distinct predicates, each defined for a particular transcript, rather than a single, well-defined attribution function determined by the ledger. Beyond this semantic difference, the AEB framework imposes a syntactic restriction requiring that, for any transcript, the blocks \emph{partition} the model’s response, disallowing overlaps. The ideal attribution framework relaxes this constraint by introducing attribution maps that do not enforce partitioning, thereby generalizing AEB syntactically.

\paragraph{Pseudorandom codes.} Pseudorandom codes, introduced by Christ and Gunn~\cite{christ2024pseudorandom}, are error-correcting codes whose codewords are pseudorandom to any observer lacking the decoding key. As objects that satisfy both pseudorandomness and robustness, PRCs provide a natural foundation for constructing watermarking schemes that achieve both undetectability and robustness~\cite{christ2024pseudorandom,gunn2024undetectable,cohen2024watermarking,golowich2024edit,alrabiah2024ideal}. In fact, they can be viewed as watermarking schemes specialized to the uniform language model, but their versatility extends beyond this setting. Recently, PRCs were used to watermark generative \emph{image} models~\cite{gunn2024undetectable}, yielding the first provably undetectable scheme for image models. This stands in contrast to other approaches to image watermarking, which target a more direct but less formal notion of watermark \emph{invisibility}, evaluated empirically through extensive human indistinguishability experiments (see e.g.,~\cite{gowal2025synthidimage}).

PRCs with varying robustness properties have been constructed under standard cryptographic assumptions, such as the subexponential hardness of learning parity with noise (LPN)~\cite{christ2024pseudorandom,ghentiyala2024new}. In particular, Golowich and Moitra~\cite{golowich2024edit} construct PRCs that are robust to a constant fraction of edits, including substitutions, insertions, and deletions, at the cost of requiring a larger alphabet.

However, the notion of robustness satisfied by previous PRC constructions is weaker than one might expect, as the decoder’s guarantee holds only against corruptions generated \emph{independently} of the PRC keys. In other words, the corruption adversary is assumed not to have observed any prior codewords. The absence of such \emph{adaptive} robustness guarantees was highlighted by Cohen et al.~\cite{cohen2024watermarking}, who asked whether existing PRCs remain robust against adversaries that can observe previously generated codewords. Ideal PRCs, introduced by Alrabiah et al.~\cite{alrabiah2024ideal}, address this limitation by constructing PRCs that remain robust even against adversaries with black-box access to both encoding and decoding oracles. This motivates our focus on ideal PRCs and the notion of faithfulness, which makes the subtle distinction between robustness notions explicit.
\section{Ideal Attribution Mechanism}
\label{sec:ideal-attribution}
A \emph{ledger} $\Pi$ is the complete history of prompt-response interactions between a language model and its users. In principle, only text actually generated by the model, i.e., entries in the ledger, should be traceable to the model provider by any attribution mechanism. As such, our framework takes ledgers as the conceptual core for designing and analyzing watermarking schemes.

We start by defining ideal mechanisms for \emph{verbatim} attribution, where only exact substrings of generated text are deemed attributable to the model provider. \emph{Robust} attribution (\refDef{def:robust-attribution}) is introduced later in~\refSc{sec:predicate-expansion}.  
These mechanisms are \emph{ideal} because they assume direct access to the ledger, which is itself an idealized object.
Although unrealistic in practice, this serves as an uncontroversial gold standard for attribution accuracy and transparency.
In~\refSc{sec:attribution-soundness}, we introduce \emph{attribution soundness} as a property that endows attribution decisions with tenable semantics for exclusive ownership claims.
In~\refSc{sec:attribution-to-watermarking}, we show how practical constraints in implementing ideal attribution mechanisms motivate the development of watermarking schemes.

\paragraph{Language model.} A language model $Q : \{0, 1\}^* \rightarrow [0, 1]$ takes in a token sequence and outputs the probability that the next token is $1$. Given a prompt $x \in \{0,1\}^*$, we obtain a \emph{response} $u = (u_1,u_2,\ldots) \leftarrow Q(x)$ by autoregressive sampling initialized with $x$. For simplicity, we assume all responses have fixed length $\ell$, i.e., $u \in \{0,1\}^\ell$.

The transcript $\pi = (x,u)$ is a single prompt-response pair. Under autoregressive sampling from $x$, each bit $u_j$ of the response is sampled as
\begin{align*}
    u_{j} \leftarrow \Ber\big(Q(x u_1\cdots u_{j-1})\big)\;,
\end{align*}
and we use the shorthand notation $u_j \leftarrow Q(\cdot \mid xu_{< j})$ or equivalently $u_j \leftarrow \Response(xu_{<j})$.
We use $\Response_\ell(x)$ to denote the $\ell$-step autoregressive sampling oracle that, given a prompt $x$, samples a length-$\ell$ response $u \in \{0,1\}^\ell$ from $Q$.

\paragraph{Ledger.} We formalize user interaction with a language model as a structured sequence of prompt-response sessions. Formally, the ledger $\Pi$ is a sequence of transcripts. That is,
\begin{align*}
\Pi = \big(\pi^{(1)},\pi^{(2)}, \ldots\big)\;,
\end{align*}
where $\pi^{(i)} = (x^{(i)},u^{(i)})$ with $x^{(i)} \in \{0,1\}^*$ a user-provided prompt and $u^{(i)} \in \{0,1\}^\ell$ a fixed-length response by the language model.

\paragraph{Ideal attribution mechanism.} Formally, an \emph{ideal attribution mechanism} is given by a sequence of \emph{attribution functions} $(\Attr_t)_{t \in \cT}$ indexed by a \emph{global time index} set
\begin{align*}
    \cT = \big\{(i,j) \mid i \in \NN\;, j\in \{0,1,\ldots,\ell\}\big\}\;,
\end{align*}
where time $t=(i,j)$ refers to the $j$-th response token $u_j$ in the $i$-th transcript $\pi^{(i)}$. The index $(i,0)$ refers to the prompt $x^{(i)}$. We order $\cT$ lexicographically: $(i,j) < (i',j')$ if $i < i'$ or $i = i'$ and $j < j'$.\footnote{With fixed-length responses, the two-tuple index set $\cT$ can be identified with $\mathbb{N}$ via a canonical bijection.}

At time $t=(i,j)$, the ledger $\Pi_t$ contains all completed transcripts together with the $j$-token prefix of the $i$-th transcript
\begin{align*}
    \Pi_t = \big(\pi^{(1)}, \pi^{(2)}, \ldots, \pi^{(i-1)}, (x^{(i)}, u^{(i)}_1, \ldots, u^{(i)}_j)\big)\;.
\end{align*}

The \emph{attribution function} $\Attr_t : \{0,1\}^* \to \{0,1\}$ depends on the ledger at time $t$ and a reference language model $Q$. For any string $\zeta \in \{0,1\}^*$, $\Attr_t(\zeta)=1$ means that $\zeta$ is deemed attributable to the ledger owner at time $t$, with the owner formally represented by the pair $(\Pi_t, Q)$.
We also write $\Attr_t = \Attr^{\Pi_t}$ to emphasize dependence on the ledger. To make this notion more operational, we will later introduce \emph{selection rules}, which provide an explicit criterion for attributability and a concrete representation of these attribution functions (\refSc{sec:transcript-attribution}).

\subsection{Transcript-level attribution}
\label{sec:transcript-attribution}

Attributability should be \emph{decidable at the moment a substring comes into existence}.
Autoregressive generation can stop after any token, and the prefix observed up to that point is fixed and recorded in the ledger.
Attribution for substrings of this prefix must therefore be determined immediately, without reference to future unseen tokens. Once made, an attribution decision is irrevocable; later tokens cannot retroactively change the status of substrings fully contained in the prefix, except if the same substring reappears as part of a longer continuation, in which case it may be reconsidered.

This motivates taking a single transcript as the natural scope of attribution decisions. Generation is autoregressive only within a transcript, so attribution decisions should depend only on the growing prefix of it. In addition, transcripts are structurally isolated within the ledger; substrings are defined only within a single transcript, and cross-transcript substrings do not exist. Hence, attribution decisions are made at the transcript level, and ledger-level attribution arises by aggregating these transcript-level decisions.

\paragraph{String notation.} For $x,y \in \{0,1\}^*$, we write $x \sqsubseteq y$ to mean that $x$ is a prefix of $y$, and $xy$ to denote their concatenation. We use $\mathsf{substrings}(y)$, $\mathsf{prefixes}(y)$, and $\mathsf{suffixes}(y)$ to denote the sets of all substrings, prefixes, and suffixes of $y$. The symbol $\square$ denotes the empty string of length $0$.

\begin{definition}[Transcript-level attribution]
\label{def:transcript-attribution}
A \emph{transcript-level attribution map} $\trAtt: \{0,1\}^* \times \{0,1\}^* \to 2^{\{0,1\}^*}$ is a set-valued function satisfying the following axioms. For any prompt $x \in \{0,1\}^*$ and response $u \in \{0,1\}^*$, the set $\trAtt(x,u) \subseteq \mathsf{substrings}(u)$ satisfies:

\begin{enumerate}  
    \item \textbf{Empty initialization.} $\trAtt(x,\square) = \emptyset$, where $\square$ denotes the empty string.

    \item \textbf{Monotonicity in transcript history.} If $u' \sqsubseteq u$ (i.e., is a prefix of), then  
    \begin{align*}
        \trAtt(x,u') \subseteq \trAtt(x,u)\;.
    \end{align*}
    That is, extending the transcript cannot remove previously included strings.

    \item \textbf{Suffix-jump property.} For $u = u_1 \cdots u_j$,
    \begin{align*}
        \trAtt(x,u_{1:j}) \setminus \trAtt(x,u_{1:j-1})  
        \subseteq \mathsf{suffixes}(u_{1:j})\;.
    \end{align*}
    That is, new attributions can only arise as response suffixes ending in the latest token $u_j$.
\end{enumerate}  
\end{definition}

We say that a string $\zeta \in \{0,1\}^*$ is \emph{attributable} to a transcript $\pi = (x,u)$ 
if $\zeta$ lies in $\trAtt(x,u)$. We define the associated 
attribution function $\Attr^\pi : \{0,1\}^* \to \{0,1\}$ by
\begin{align*}
    \Attr^\pi(\zeta) = 1 \quad\iff\quad
    \zeta \in \trAtt(x,u)\;.
\end{align*}

We now lift transcript-level decisions to the ledger. The ledger-level decision is simply the max (equivalently, logical OR) of transcript-level decisions.

\begin{definition}[Ledger-level attribution]
\label{def:ledger-attribution}
Given a transcript-level attribution map $\trAtt$ (Definition~\ref{def:transcript-attribution}) and a ledger 
$\Pi = (\pi^{(1)},\ldots,\pi^{(m)})$ with $\pi^{(i)} = (x^{(i)},u^{(i)})$, we define the set of attributable strings with respect to $\Pi$ and $\trAtt$ by
\begin{align*}
    \trAtt(\Pi) = \bigcup_{i=1}^m \trAtt\big(\pi^{(i)}\big)\;.
\end{align*}
\end{definition}

We write $\Attr_t = \Attr^{\Pi_t}$ for the attribution function at global time $t$, which is defined by
\begin{align*}
    \Attr_t(\zeta) = 1 
    \quad\iff\quad \zeta \in \trAtt(\Pi_t)\;.
\end{align*}

Each axiom in~\refDef{def:transcript-attribution} is straightforward to motivate on its own, but together they specify the mechanism only at an abstract level. For a more direct view, we introduce \emph{selection rules}, which represent attribution decisions made on the spot as substrings arise.

\paragraph{Selection rules.} A \emph{selection rule} $\sZ$ maps a triple $(x,\rho,\zeta)$, consisting of a prompt $x$, a response prefix $\rho$, and a response suffix $\zeta$, to an attribution decision in $\{0,1\}$.\footnote{More generally, we can define $\sZ$ as a functional that also takes the model $Q$ as input, and computes a \emph{relation} on $(x, y, Q)$. Since we assume $Q$ is fixed throughout the ledger generation process, $\sZ$ is a deterministic function on $\{0,1\}^* \times \{0,1\}^* \times \{0,1\}^*$, the space of all context--response pairs.} For each transcript, $\sZ$ induces a sequence of selection \emph{vectors} $(z^{(i)}_j)_{j \in [\ell]}$ naturally paired with the response tokens. Specifically, we define for any $i \in \NN$ and $j \in \{1,\ldots,\ell\}$,
\begin{align*}
    z^{(i)}_j = \big(z^{(i)}_{1,j}, z^{(i)}_{2,j}, \ldots, z^{(i)}_{\ell,j}\big) \in \{0,1\}^\ell\;,\quad \text{ where}\quad z^{(i)}_{k,j} &= \begin{cases}
        \sZ(x^{(i)},u^{(i)}_{1:k},u^{(i)}_{k:j}) &\text{if } k \le j\;,\\
        0 &\text{else}\;.
    \end{cases}
\end{align*}

The natural pairing of response tokens and selection vectors is then
\begin{align*}
    \pi^{(i)} 
    &= \big(x^{(i)}, (u^{(i)}_1, z^{(i)}_1), (u^{(i)}_2, z^{(i)}_2), \ldots, (u^{(i)}_\ell, z^{(i)}_\ell)\big)\;.
\end{align*}

We say that a string $\zeta$ is \emph{attributable} to a transcript 
$\pi = (x,(u_1,z_1),\ldots,(u_\ell,z_\ell))$ paired with selection vectors of $\sZ$ if there exist $k \leq j$ such that 
$u_{k:j} = \zeta$ and $z_{k,j}=1$. Trivial examples of selection rules include the constant rule $\sZ = 1$ (i.e., every substring of generated response is deemed attributable) and $\sZ = 0$ (i.e., no string is deemed attributable). 

The following claim shows that any transcript-level attribution map can be represented by a selection rule, and vice versa. This provides formal justification for working with selection rules going forward. We prefer this view because selection rules are more direct and interpretable.

\begin{proposition}[Surjection from selection rules to attribution maps]
\label{prop:selection-attribution-equiv}
Every transcript-level attribution map $\trAtt:\{0,1\}^*\times\{0,1\}^*\to 2^{\{0,1\}^*}$ satisfying~\refDef{def:transcript-attribution} can be induced by some selection rule $\sZ:\{0,1\}^*\times\{0,1\}^*\times\{0,1\}^* \to \{0,1\}$. Conversely, any transcript-level attribution map satisfying~\refDef{def:transcript-attribution} induces a selection rule.
\end{proposition}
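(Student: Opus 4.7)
The plan is to exhibit two natural constructions --- one producing a selection rule from an attribution map, and one producing an attribution map from a selection rule --- and then verify, in each direction, that the axioms of \refDef{def:transcript-attribution} and the notion of attribution via selection vectors line up.

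For the forward direction, given $\trAtt$ satisfying the three axioms, I would define a selection rule $\sZ$ by setting $\sZ(x,\rho,\zeta) = 1$ precisely when $\zeta \in \trAtt(x, w(\rho,\zeta))$, where $w(\rho,\zeta)$ is the response prefix reconstructible from $\rho = u_{1:k}$ and $\zeta = u_{k:j}$ (so that in the natural pairing at index $(k,j)$, the argument to $\trAtt$ is exactly $u_{1:j}$). Then I check that $\zeta \in \trAtt(x,u)$ iff some pair $(k,j)$ satisfies $u_{k:j}=\zeta$ and $z_{k,j}=1$. The $(\Leftarrow)$ direction is immediate from monotonicity: if $z_{k,j}=1$, then $\zeta \in \trAtt(x,u_{1:j}) \subseteq \trAtt(x,u)$. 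The $(\Rightarrow)$ direction is where the axioms earn their keep: since $\trAtt(x,\square)=\emptyset$, every $\zeta \in \trAtt(x,u)$ has a minimal first-appearance index $j^\star$, and by the suffix-jump axiom $\zeta \in \mathsf{suffixes}(u_{1:j^\star})$, so $\zeta = u_{k:j^\star}$ for some $k \le j^\star$, which by construction satisfies $z_{k,j^\star}=1$.

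For the reverse direction, given an arbitrary selection rule $\sZ$, define $\trAtt(x,u)$ to be the set of $\zeta$ for which there exist $k \le j \le |u|$ with $u_{k:j}=\zeta$ and $\sZ(x, u_{1:k}, u_{k:j})=1$. The three axioms then follow mechanically. Empty initialization holds vacuously because $u = \square$ admits no valid index pair. Monotonicity holds because $\sZ(x, u_{1:k}, u_{k:j})$ depends only on $u_{1:j}$, so every witness for a prefix survives in any extension. Suffix-jump holds because a $\zeta$ newly attributed at step $j$ cannot be witnessed by any $(k,j')$ with $j'<j$ --- otherwise it would already lie in $\trAtt(x, u_{1:j-1})$ --- forcing $j'=j$ and thus $\zeta = u_{k:j} \in \mathsf{suffixes}(u_{1:j})$.

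The only genuine subtlety is in the forward construction: the selection rule receives only $(x,\rho,\zeta)$, so one must check that $\rho$ and $\zeta$ together determine the response prefix $u_{1:j}$ at which the attribution decision is to be made --- a mild bookkeeping point about the indexing convention of the natural pairing. Once that is settled, the suffix-jump axiom is exactly the statement that every attribution event can be localized at a single token position, which is precisely the structural property a selection rule is designed to capture; both directions then reduce to routine verifications.
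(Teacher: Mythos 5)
Your proof is correct and follows the same overall strategy as the paper, with one genuine difference in the forward direction: the paper defines the selection rule by the ``newly appearing'' criterion $\sZ(x,\rho,\zeta)=\one[\zeta\in\trAtt(x,\rho\zeta)\setminus\trAtt(x,\rho\zeta_{1:-1})]$, whereas you use the simpler membership criterion $\sZ(x,\rho,\zeta)=1\iff\zeta\in\trAtt(x,\rho\zeta)$. Both induce the same attribution map --- this is precisely the non-injectivity the paper illustrates in the example following the proposition --- but your choice does require the extra justification you supply: that every $\zeta\in\trAtt(x,u)$ has a minimal first-appearance index $j^\star$ (guaranteed by $\trAtt(x,\square)=\emptyset$), at which point the suffix-jump axiom forces $\zeta=u_{k:j^\star}$ for some $k$, so the membership rule does fire at a valid index pair of the natural pairing. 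Your reverse direction verifies the three axioms directly from the attributability-via-selection-vectors definition rather than via the paper's inductive reconstruction; the two are equivalent, and yours is arguably more transparent about where each axiom is used (empty initialization is vacuous, monotonicity uses that $\sZ$ depends only on $u_{1:j}$, suffix-jump forces $j'=j$). Your closing remark that $(\rho,\zeta)$ reconstructs the prefix $\rho\zeta$ at which the decision is made is the same bookkeeping the paper handles implicitly. In short: correct, same skeleton, marginally simpler construction on one side.
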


\begin{proof}
Given a selection rule $\sZ$, initialize with $\trAtt(x,\square):=\emptyset$ and, for $u=u_1\cdots u_\ell$, inductively define
\begin{align*}
\trAtt(x,u_{1:j})
&= \trAtt(x,u_{1:j-1})
   \;\cup\; \big\{u_{k:j}\;\mid\;1\le k\le j,\;\sZ\!\big(x,u_{1:k-1},\;u_{k:j}\big)=1\big\}\;.
\end{align*}

Conversely, given a transcript-level attribution map $\trAtt$ and any $x, \rho, \zeta \in \{0,1\}^*$, define $\sZ$ by
\begin{align*}
\sZ\!\big(x, \rho, \zeta\big) \;=\; \one\big[\zeta \in \trAtt(x,\rho\zeta)\setminus \trAtt(x,\rho \zeta_{1:-1})\big]\;.
\end{align*}

Starting from $\trAtt$ satisfying~\refDef{def:transcript-attribution} and forming $\sZ$ as above, the inductive construction recovers $\trAtt$ since by the suffix-jump property of $\trAtt$, for any $x \in \{0,1\}^*$ and $u \in \{0,1\}^*$,
\begin{align*}
    \trAtt(x,u_{1:j}) 
    &= \trAtt(x,u_{1:j-1})\;\cup\; \Big(\trAtt(x,u_{1:j})\setminus \trAtt(x,u_{1:j-1})\Big) \\
    &= \trAtt(x,u_{1:j-1})\;\cup\;\{\zeta \in \mathsf{suffixes}(u_{1:j})\;\mid\; \zeta \in \trAtt(x,u_{1:j})\setminus\trAtt(x,u_{1:j-1})\}\\
    &= \trAtt(x,u_{1:j-1})
       \;\cup\; \big\{u_{k:j}\;\mid\;1\le k\le j,\;\sZ\!\big(x,u_{1:k-1},\;u_{k:j}\big)=1\big\}\;.
\end{align*}

Conversely, starting from an arbitrary selection rule $\sZ$, it is straightforward to check that the induced $\trAtt$ satisfies all the properties listed in~\refDef{def:transcript-attribution}.
\end{proof}

\begin{example}[Non-injectivity]
Distinct selection rules may induce the same transcript-level attribution map. Fix any non-empty string $y \in \{0,1\}^*$.
Define $\sZ$ and $\sZ'$ by
\begin{align*}
\sZ(x,\rho,\zeta) &=
\begin{cases}
   1 & \text{if } (x,\rho,\zeta) \in \{(0,\square,y),(0,y,y)\}\;, \\
   0 & \text{otherwise}\;.
\end{cases}
&\quad
\sZ'(x,\rho,\zeta) &=
\begin{cases}
   1 & \text{if } (x,\rho,\zeta) = (0,\square,y)\;, \\
   0 & \text{otherwise}\;.
\end{cases}
\end{align*}

Although $\sZ$ selects $y$ on input $(0,y,y)$, this extra selection does not cause changes in the attribution set since $y$ is already included after its first appearance. 
Thus, both rules induce the attribution set on any transcript $\pi$. 
Hence, $\sZ \neq \sZ'$ but $\trAtt(\pi)=\trAtt'(\pi)$ for all $\pi \in \{0,1\}^* \times \{0,1\}^*$.
\end{example}

\subsection{Meaningful attribution via soundness}
\label{sec:attribution-soundness}

We view attribution as the model provider’s ownership claim over responses recorded in the model’s ledger. The weight of an ownership claim rests on \emph{exclusivity}: membership in the attribution set $\trAtt(\Pi)$ should not be automatic or easily guaranteed. It should convey more than mere appearance in the ledger, meaningfully separating its members from any strings that have appeared or could arise independently in the world.

For instance, copy-instruction prompts such as ``copy: ...'' can cause language models to reproduce arbitrary strings with near certainty, allowing an adversary to push arbitrary strings into the ledger through simple prompting. If responses generated in this way are not rejected by the selection rule, then any string chosen in advance, for example an entire page of \emph{The Hobbit}, could be planted into the attribution set. An attribution mechanism that accepts such on-demand manufactured responses cannot support a convincing notion of ownership.

We impose a soundness condition that requires the selection rule to reject responses that dilute exclusivity. This ensures that membership in the attribution set carries evidential value as an ownership claim. Formally, soundness requires that strings fixed in advance of ledger generation do not appear in the attribution set with overwhelming probability, even if a prompting adversary attempts to plant them.

\begin{definition}[Attribution soundness]
\label{def:attribution-soundness}
Let $Q: \{0,1\}^* \to [0,1]$ be a language model, and let $\lambda \in \NN$ be the security parameter. We say a sequence of attribution mechanisms $(\trAtt_\lambda)_{\lambda \in \NN}$ (\refDef{def:ledger-attribution}) is \emph{sound} with respect to $Q$ if for any fixed $y \in \{0,1\}^*$, any $\ell, m = \poly(\lambda)$, and any adversary $\cB$ restricted to at most $m$ sampling queries,
\begin{align*}
    \Pr\left[\Pi \leftarrow \cB^{\bar{Q}}(1^\lambda, y): \Attr^\Pi_\lambda(y) = 1
    \right] \le \negl(\lambda)\;,
\end{align*}
where $\Pi$ is the ledger generated by the interaction between $\cB$ and the sampling oracle $\bar{Q}$, which returns a length-$\ell$ string in response to each prompt of $\cB$.
\end{definition}

\paragraph{Interpretation of attribution soundness.} 
If an attribution mechanism parameterized by the security parameter $\lambda$ is sound, then any string fixed before the realization of the ledger is extremely unlikely appear in the attribution set. In particular, common phrases, widely regarded as unattributable to any specific entity, will practically never appear in the attribution set since the set of all common phrases can be specified prior to the ledger-generating interaction.

Furthermore, a positive attribution decision $\Attr(y)=1$ means that $y$ is uniquely assigned to the ledger owner in the sense that it belongs \emph{only} to $\trAtt(\Pi)$ and not to the attribution set of any independently generated ledger $\Pi'$. If two distinct model providers were to deploy the same language model $Q$, their respective attribution sets $\trAtt(\Pi)$ and $\trAtt(\Pi')$ would be disjoint except with negligible probability. This uniqueness embodies the clearest form of exclusivity and serves as the basis for interpreting attribution as ownership.

We provide sufficient and necessary conditions for selection rules to satisfy attribution soundness in Appendix~\ref{sec:selection-rule}. Informally, given a prompt $x \in \{0,1\}^*$, a response prefix $\rho \in \{0,1\}^*$, and a response suffix $y \in \{0,1\}^*$, 
the selection rule $\sZ$ should rarely select $y$ if $y$ is likely to be sampled under $\bar{Q}(\cdot \mid x\rho)$. 
For instance, with the copy-instruction prompt $x=\text{``copy: $y$''}$ and $\rho=\square$ (empty string), marking $y$ as an attributable response would clearly violate this condition. A trivial but vacuous selection rule satisfying soundness is one that rejects all responses, i.e., $\sZ \equiv 0$.

\subsection{Predicate-based expansion and robust attribution}
\label{sec:predicate-expansion}
We use predicates to formalize ``closeness'' between strings.  From these predicates, we define expansions of individual strings, and then extend the definition to sets of strings.

\begin{definition}[Predicate]
\label{def:predicate}
A \emph{predicate} is a Boolean-valued function $\Phi : \cX \to \{0,1\}$ defined on some domain $\cX$.
\end{definition}
In this work, we fix the domain to be $\cX = \cup_{n \in \NN} \big(\{0,1\}^n \times \{0,1\}^n\big)$, so that $\Phi$ takes as input a pair of equal-length binary strings and outputs a bit. A canonical example is the Hamming predicate.
\begin{definition}[Hamming]
\label{def:hamming-predicate}
    For any $n \in \NN$ and $r = r(n)$, we define the Hamming predicate $\Ham_r$ as follows. For any equal length strings $y,y' \in \{0,1\}^n$
    \begin{align*}
        \Ham_r(y,y') = (\|y-y'\|_0 \le r)
    \end{align*}
\end{definition}

\begin{definition}[Predicate-based expansion]
\label{def:predicate-expansion}
Let $\Phi: \cX \to \{0,1\}$ be a predicate. For any $y \in \{0,1\}^*$, its $\Phi$-expansion is defined by
\begin{align*}
\Phi(y) = \left\{ \zeta \in \{0,1\}^* \;\middle|\; \Phi(y, \zeta) = 1 \right\}.
\end{align*}
For any set $A \subseteq \{0,1\}^*$, its $\Phi$-expansion is defined by
\begin{align*}
    \Phi(A) = \cup_{y \in A} \Phi(y)\;.
\end{align*}
Given two predicates $\Phi_1, \Phi_2$, the expansion of their composition $\Phi_2 \circ \Phi_1$ is defined by
\begin{align*}
    \Phi_2 \circ \Phi_1(y) = \Phi_2(\Phi_1(y))\;.
\end{align*}
\end{definition}
In particular, for an attribution set $\trAtt(\Pi)$ we simply denote its $\Phi$-expansion by $\Phi(\trAtt(\Pi))$, the set obtained by applying $\Phi$ to every string in $\trAtt(\Pi)$. Its membership function is written as $\Attr^{\Phi, \Pi} : \{0,1\}^* \to \{0,1\}$.

\begin{definition}[Robust attribution]
\label{def:robust-attribution}
Let $\Phi: \cX \to \{0,1\}$ be any predicate, and let $\trAtt: \{0,1\}^* \times \{0,1\}^* \to 2^{\{0,1\}^*}$ be any transcript-level attribution map (\refDef{def:transcript-attribution}). Then, For any transcript $\pi \in \{0,1\}^* \times \{0,1\}^*$, the $\Phi$-robust attribution set with respect to $\trAtt$ is defined as $\Phi(\trAtt(\pi))$, and denote the corresponding attribution function by $\Attr^{\Phi, \pi}$
\end{definition}

\begin{remark}
Equivalently, one may define a $\Phi$-robustified attribution map $\trAtt' = \Phi \circ \trAtt$ and work with $\trAtt'$ directly.  
We keep $\Phi$ and the underlying attribution map $\trAtt$ separate to keep the perturbation class and verbatim attribution conceptually distinct.
\end{remark}

\subsection{From ideal attribution to watermarking}
\label{sec:attribution-to-watermarking}

Ideal attribution mechanisms serve as a clean conceptual model, but are unsuitable for real-world use. They rely on the assumption of direct access to the ledger, an assumption that is undesirable for practical deployment for several reasons such as:
\begin{enumerate}
    \item \textbf{Privacy liability.} Model providers risk liability if transcripts containing private user data are leaked. Explicit ledger storage also creates a significant compliance burden under consumer data protection regulations such as the European Union’s General Data Protection Regulation (GDPR). 
    \item \textbf{Service design.} Some providers, such as ChatGPT Enterprise~\cite{openai2025businessdata}, provide ``zero data retention'' as a privacy feature, meaning they do not explicitly store user prompts or responses.  
    \item \textbf{Operational overhead.} If \emph{public} attribution is desired, then even setting aside privacy concerns, the ledger would need to be continuously updated, shared in real time, and accessed in full whenever an attribution query is made.
\end{enumerate}

Our goal in watermarking is to retain the clarity of ideal attribution decisions while avoiding explicit storage of the ledger or any representation of it that requires continual updates. Concretely, this means selecting a target ideal attribution mechanism and designing a watermarking scheme that computes its attribution decisions without direct access to the ledger. This stands in contrast to prior watermarking schemes~\cite{kuditipudi2023robust,christ2024undetectable,christ2024pseudorandom,fairoze2023publicly}, which lack ``ideal'' mechanisms whose functionality they are designed to preserve. A notable exception is the work on ideal pseudorandom codes~\cite{alrabiah2024ideal,ghentiyala2024new}, which can be viewed as watermarking schemes for the uniform distribution, i.e., language models that always assign equal probability to 0 and 1.

A key feature of our approach is the conceptual clarity it inherits from the ideal attribution mechanism. When a verifier, operating without access to the ledger, outputs a decision bit on a given string, faithfulness of the watermarking scheme ensures that this decision is essentially identical to that of the ideal attribution mechanism. The ideal mechanism then provides a precise interpretation of the decision, explaining \emph{why} the string was marked as attributable or not.
\section{Undetectable and Faithful Watermarking}
\label{sec:watermarking-scheme}

Watermarking schemes realize ideal attribution mechanisms without explicitly maintaining the ledger. Instead, the ledger is replaced by a succinct \emph{watermark key} fixed at the start of the ledger-generating process. A watermark key $\wk = (\pk,\sk)$ consists of a public component $\pk$ and a secret component $\sk$. Technically speaking, our faithfulness definition
is formulated in a secret-key setting, since no part of the watermark key is
revealed to a black-box adversary. Nevertheless, we decompose the watermark
key into public and secret components to naturally extend the definition to
publicly verifiable watermarking schemes, where adversaries have direct
access to $\pk$.

Below we formally define undetectable and faithful watermarking schemes, and provide intuitive interpretations of the technical statements. 
We address the apparent type mismatch between the time-invariant watermark key and the time-evolving ledger in~\refSc{sec:reconcile-time-invariance}. We then show that the definition is non-vacuous by demonstrating how ideal PRCs can be viewed as undetectable and faithful watermarking schemes for a canonical attribution map and the uniform language model~$\cU \equiv 1/2$ (\refSc{sec:uniform-language-model}). 
Finally, we analyze the false-positive and false-negative guarantees of PRC-based watermarking for general language models (\refSc{sec:watermark-generic-construction}).

\begin{definition}[Undetectable and faithful watermarking]
\label{def:watermarking-faithful}
Let $Q: \{0,1\}^* \to [0,1]$ be an efficiently computable language model, let $\Phi$ be an efficiently computable predicate, and let $\trAtt = (\trAtt_\lambda)_{\lambda \in \NN}$ be an efficiently computable transcript-level attribution map. An \emph{undetectable} and \emph{faithful} watermarking scheme for $Q$ and $\trAtt^\Phi$ is a triple of PPT algorithms $(\Gen,\Wat,\Ver)$ satisfying the following properties.

\begin{description}    
    \item[Undetectability:] For any PPT adversary $\cA$,
    \begin{align*}
        \left| \Pr[\cA^{\Response}(1^\lambda) = 1] -  \Pr[\cA^{\Wat_\sk}(1^\lambda)=1]\right| \le \negl(\lambda)\;.
    \end{align*}

    \item[Faithfulness:] For any PPT adversary $\cA$,
    \begin{align*}
    \Pr\left[
    \begin{array}{rl}
        (\pk, \sk) &\leftarrow \Gen(1^\lambda)\\
        \zeta &\leftarrow \cA^{\Wat_\sk,\Ver_\pk}(1^\lambda)
    \end{array} 
    :
    \begin{array}{rl}
    \Ver_\pk(\zeta) \neq \Attr_{\lambda,s}^{\Phi}(\zeta)
    \end{array}
    \right] \le \negl(\lambda)\;,
    \end{align*}
    where $s \in \cT$ is the time at which $\zeta$ is output ($\cA$ may continue interacting after this point) and $\Attr_{\lambda,s}^\Phi$ denotes the $\Phi$-expansion of the attribution function $\Attr_{\lambda,s}$, determined by the ledger $\Pi_s$ and the expanded transcript-level attribution map $\trAtt_\lambda^\Phi$.
\end{description}
\end{definition}

For notational convenience, we henceforth suppress the security parameter $\lambda$ when clear from context (e.g., writing $\Attr_t^\Phi$ instead of $\Attr_{\lambda,t}^\Phi$).

\begin{remark}[Efficient computability]
Because the language model $Q$, predicate $\Phi$, and transcript-level attribution map $\trAtt$ are all efficiently computable, a PPT adversary can simulate them directly. 
Hence, undetectability and faithfulness must hold even against adversaries with access to $Q$ and the entire sequence of attribution functions $(\Attr_t^\Phi)_{t \in \cT}$.
\end{remark}

\paragraph{Interpretation of undetectability.} Undetectability requires that no efficient adversary can distinguish between the sampling oracle $\Response$ and the watermarked oracle $\Wat_\sk$, even with multiple interactions and access to next-token probabilities from $Q$. This represents the strongest form of quality preservation: the watermarked model generates transcripts of essentially the same \emph{quality} as those of the unwatermarked model.

\paragraph{Interpretation of faithfulness.}
Faithfulness requires that $\Ver_\pk(\zeta)$ equal the robust attribution decision $\Attr_s^\Phi(\zeta)$, where $s$ is the time at which the adversary outputs the candidate string $\zeta$. The adversary has black-box access to the oracles $\Wat_\sk$ and $\Ver_\pk$, and we interpret such adversaries as modeling \emph{honest} users. That is, they use the public key $\pk$ solely for its intended purpose of verification through $\Ver_\pk(\cdot)$, without attempting to exploit it. Thus, for honest users, querying $\Ver_\pk$ at time $s$ is effectively equivalent to querying the ideal attribution function $\Attr_s^\Phi$, which in principle requires explicit maintenance of the ledger.

The failure event in faithfulness is $\Ver_\pk(\zeta) \neq \Attr_s^\Phi(\zeta)$, which splits into two events: false positives $\Ver_\pk(\zeta) > \Attr_s^\Phi(\zeta)$, and false negatives $\Ver_\pk(\zeta)< \Attr_s^\Phi(\zeta)$. 
A false positive means that $\Ver_\pk$ accepts a string $\zeta$ not actually attributable under $\Attr_s^\Phi$, i.e., one that is $\Phi$-far from every string in $\trAtt(\Pi_s)$. 
Conversely, a false negative means that $\Ver_\pk$ rejects a string that is in fact attributable under $\Attr_s^\Phi$, i.e., one that is $\Phi$-close to some string in the attribution set $\trAtt(\Pi_s)$.

\paragraph{Stronger soundness for honest LLM users.} In~\refSc{sec:reconcile-time-invariance}, we introduce a stronger soundness property than~\refDef{def:attribution-soundness}, which ensures that $\Attr_s^\Phi(\zeta) = \Attr_T^\Phi(\zeta)$, where $T$ is the termination time of the ledger. 
This means that the attribution decision for $\zeta$ is forward-stable. That is, once $\zeta$ is queried for attribution at its generation time $s$, the decision will, with overwhelming probability, remain unchanged when the ledger is fully extended to $T$.

For honest LLM users, this means that if one ensures $\zeta$ is not robustly attributable to the responses observed so far, then $\zeta$ will, with overwhelming probability, remain unattributed as the ledger continues to grow in the future. 
Thus, even without querying $\Ver_\pk$, the user can be confident that a string $\zeta$ generated in this way will remain safe from detection by $\Ver_\pk$. 
This yields a strengthened soundness guarantee for honest users. While independence between $\zeta$ and $(\pk, \sk)$ would suffice to prevent watermark detection, our guarantee extends further and continues to hold even if the user interacts with $\Wat_\sk$ before generating $\zeta$.

\paragraph{Replacing the ledger with a fixed key.}
Watermarking schemes achieve their guarantees by coupling the responses generated by $\Wat_\sk$ with the public key $\pk$. The built-in dependence structure enables computation of attribution decisions without access to the ledger. 

In effect, the public key $\pk$, generated and announced at genesis, serves as a succinct commitment to the full ledger that unfolds over time. Committing at time zero resolves the issues with ideal attribution mechanisms discussed in~\refSc{sec:attribution-to-watermarking}.

\begin{enumerate}
    \item \textbf{Privacy of transcripts.} The watermark key $(\pk, \sk)$ is sampled and fixed before any interaction with $\Wat_\sk$. Since this occurs prior to receiving user-provided prompts, and the scheme does not store any part of the growing ledger, transcript privacy is preserved.
    
    \item \textbf{Low communication overhead.} Fixing the detection key $\pk$ at the beginning significantly reduces communication overhead compared to ideal attribution, since the key is succinct and does not require continuous updates.
\end{enumerate}

\subsection{Reconciling time-invariant verification and time-dependent attribution}
\label{sec:reconcile-time-invariance}

The faithfulness guarantee $\Ver_\pk(\zeta) = \Attr_s^\Phi(\zeta)$ is tied to the specific time $s$ at which $\Ver_\pk$ is queried with $\zeta$. While $\Ver_\pk$ is time-invariant, the ideal attribution mechanism is query time dependent, represented by a sequence of attribution functions $(\Attr_t)_{t \in \cT}$. This raises the question: given a string $\zeta$ output at time $s$, what can be said about the relation between $\Attr_s(\zeta)$ and $\Attr_t(\zeta)$ for $s < t$, beyond trivial monotonicity $\Attr_s(\zeta) \le \Attr_t(\zeta)$? In particular, can we guarantee that the undesirable event $\Attr_s(\zeta) < \Attr_t(\zeta)$ occurs only with negligible probability?

We define a stronger soundness property, called \emph{anytime soundness}\footnote{The term ``anytime'' is inspired by the notion of anytime validity in sequential testing~\cite{ramdas2023game}.},
which formalizes forward stability of ideal attribution functions (\refDef{def:anytime-soundness}). We show that anytime soundness is \emph{necessary} for the existence of a faithful watermarking scheme for the target attribution mechanism.

For an ideal attribution mechanism to satisfy this property, however, it is necessary to coarsen time to a strict subset $\cT' \subset \cT$. Without coarsening, an adversary observing the growing transcript can choose an ``edge-of-inclusion'' string $\zeta$ that, depending on the next sampled bit, has a high probability of entering the attribution set, and then output a guess of its completion.
We illustrate this edge-of-inclusion attack for the uniform language model in~\refEx{example:edge-of-inclusion}.

In subsequent sections, we restrict our attention to guarantees against adversaries whose attribution queries are restricted to a coarsened time set $\cT'$, as anytime soundness does not hold against more general adversaries.

\begin{definition}[Time-aligned adversaries]
Let $\Response$ be a sampling oracle and let $\cT' \subseteq \cT$.
An interaction adversary $\cB$ is called \emph{time-aligned} with $\cT'$ if, in any execution, it
chooses an \emph{output time} $s \in \cT'$, at which it produces $\zeta \in \{0,1\}^*$, and a
\emph{termination time} $t \in \cT'$ at which it halts, with $s \le t$.
The times $s$ and $t$ may be randomized and may depend on the preceding interaction
history (i.e., they are stopping times), but must always lie in $\cT'$.
\end{definition}

\begin{definition}[Anytime soundness]
\label{def:anytime-soundness}
Let $Q : \{0,1\}^* \to [0,1]$ be a language model, let $\Phi: \cX \to \{0,1\}$ be a predicate, and let $\lambda \in \NN$ be the security parameter.  
A sequence of attribution mechanisms $(\trAtt_\lambda)_{\lambda \in \NN}$ is \emph{anytime sound} with respect to model $Q$, predicate $\Phi$, and coarsened time set $\cT_\lambda' \subset \cT$ if the following holds. 

For any $\ell, q = \poly(\lambda)$ and any time-aligned adversary $\cB$ with respect to $\cT'_\lambda$ making at most $q$ sampling queries to $\Response$, where $\Response$ is an autoregressive sampling oracle for length-$\ell$ responses from~$Q$,
\begin{align*}
    \Pr\left[
        \zeta \leftarrow \cB^{\Response}(1^\lambda) :
        \Attr_{\lambda,s}^\Phi(\zeta) < \Attr_{\lambda,T}^\Phi(\zeta)
    \right] \leq \negl(\lambda)\,,
\end{align*} 
where $\zeta \in \{0,1\}^*$ is the (early) output of $\cB$, $s \in \cT'_\lambda$ is the time at which $\zeta$ is produced, and $T$ is the termination time of the interaction between $\cB$ and~$\Response$.
\end{definition}

\paragraph{Interpretation of anytime soundness.} Anytime soundness means that, while $\trAtt(\Pi_T) \setminus \trAtt(\Pi_s)$ may be nonempty, it is statistically hard to predict, at time $s \in \cT'$, which specific strings (if any) will appear in this set difference. In contrast, basic soundness in~\refDef{def:attribution-soundness} corresponds to the special case $s = (0,0)$, meaning that it is statistically hard to predict, at time zero, which strings will appear in $\trAtt(\Pi_T)$.

\paragraph{Necessity of anytime soundness for faithfulness.}
Let $Q$ be a language model, $\Phi$ a predicate, and $\trAtt = (\trAtt_\lambda)$ a sequence of transcript-level attribution maps. Suppose there exists a PPT adversary $\cA$ and a polynomial $p:\NN \to \NN$ such that
\begin{align*}
    \Pr\left[
        \zeta \leftarrow \cA^{\Response}(1^\lambda) :
        \Attr_{s}^\Phi(\zeta) < \Attr_{T}^\Phi(\zeta)
    \right] > \frac{1}{p(\lambda)} \;,
\end{align*}
where $s$ is the time at which $\cA$ outputs $\zeta$.

In other words, with noticeable probability, $\cA$ produces a string $\zeta$ whose attribution decision with respect to $\trAtt^\Phi$ differs between its generation time $s$ and the ledger termination time $T$.

Let $(\Gen, \Wat, \Ver)$ be any undetectable and faithful watermarking scheme for $Q, \Phi$, and $\trAtt$. By the undetectability of the watermarking scheme, we have that
\begin{align*}
    \Pr\left[
    \begin{array}{rl}
        (\pk, \sk) &\leftarrow \Gen(1^\lambda)\\
        \zeta &\leftarrow \cA^{\Wat_\sk}(1^\lambda)
    \end{array}
    :
    \Attr_{s}^\Phi(\zeta) < \Attr_{T}^\Phi(\zeta)
    \right] > \frac{1}{p(\lambda)} - \negl(\lambda)\,.
\end{align*}

Now define a PPT adversary $\cA'$ that simulates $\cA$ and, upon obtaining $\zeta$, flips an independent fair coin. With probability $1/2$ it queries $\Ver_\pk(\zeta)$ at time $s$, the output time of $\cA$, and with probability $1/2$ it queries at time $T$. By construction, with probability $1/2\cdot (1/p(\lambda)-\negl(\lambda))$, $\cA'$ satisfies
\begin{align*}
    \Ver_\pk(\zeta) \neq \Attr_\tau^\Phi(\zeta)\;,
\end{align*}
where $\tau \in \{s, T\}$ denotes the \emph{random} query time chosen by $\cA'$ according to the coin flip.

This contradicts the faithfulness guarantee of $\Ver_\pk$. Hence, anytime soundness is a necessary condition for faithfulness.

\paragraph{Necessity of time coarsening for anytime soundness.}

Anytime soundness is unachievable, even for the uniform language model, unless the global time set $\cT$ is coarsened. Without coarsening, an adversary can efficiently recognize a response prefix on the edge of inclusion, i.e., a prefix that is one bit short of entering the attribution set, and exploit it to violate anytime soundness by issuing the attribution query as soon as the edge-of-inclusion event is observed.

Time coarsening prevents edge-of-inclusion attacks by restricting the set of allowed query times.
The coarsened time set $\cT' \subset \cT$ ensures that adversaries cannot issue attribution queries exactly when edge-of-inclusion events occur (see Claim~\ref{claim:block-selection-soundness}).

\begin{example}[Edge-of-inclusion adversary]
\label{example:edge-of-inclusion}
Let $\ell = \poly(\lambda)$ be the response length and let $\sZ=(\sZ_\lambda)$ be a \emph{non-vacuous} sequence of selection rules such that for any constant $c>0$, there exists $q=\poly(\lambda)$ such that
\begin{align*}
    \Pr\left[\Pi_T \leftarrow \cU_\ell^{\otimes q}: \trAtt_\lambda(\Pi_T) \neq \emptyset\right] > 1-c\;,
\end{align*}
where $\cU_\ell$ denotes the uniform distribution over $\{0,1\}^{\ell}$ and $T = (q, \ell)$.

We define the edge-of-inclusion adversary $\cA$ as follows. For each transcript, it observes the growing response string $u_{1:j}$.  
For the first transcript index $i$ where it finds some $a \in \{0,1\}$ such that
\begin{align*}
    \trAtt(u_{1:j}) = \emptyset
    \quad\text{and}\quad
    \trAtt(u_{1:j}a) \neq \emptyset\;,
\end{align*}
it outputs $\zeta = u_{1:j}a$ at time $s = (i,j)$ and continues the interaction until time $T$. We call $u_{1:j}$ the edge-of-inclusion string, and the time $s = (i,j)$ at which it is observed the edge-of-inclusion time.

Since the selection rule $\sZ$ is non-vacuous, with probability at least $1-c$ such an edge-of-inclusion event occurs before the termination time $T=\poly(\lambda)$. For $\zeta$ defined as such, we have
\begin{align*}
    \Pr\left[\zeta \leftarrow \cA^{\cU_\ell}(1^\lambda) :\Attr_s(\zeta) = 0\;\wedge\; \Attr_T(\zeta) = 1\right] > \frac{1}{2}(1-c)\;,
\end{align*}
which witnesses $\Attr_s(\zeta) \neq \Attr_T(\zeta)$ and thus violates anytime soundness.
\end{example}

\subsection{Watermarking the uniform language model}
\label{sec:uniform-language-model}

We show that ideal PRCs yield undetectable and faithful watermarking schemes for the uniform language model $\cU$, defined by $\cU(x) = 1/2$ for all $x \in \{0,1\}^*$. This simple, clean setting serves as a starting point for understanding what are achievable. We first identify the ideal attribution mechanism implicitly targeted by ideal PRCs, and then formally define ideal PRCs.

\paragraph{Block-aligned attribution queries.}
We consider faithfulness only at block-aligned attribution query times: $\Wat$ may generate responses token-by-token internally, but its outputs are revealed to the adversary in blocks of length $n$, and attribution queries occur only at these boundaries. All faithfulness statements henceforth are interpreted under this block alignment. See~\refSc{sec:reconcile-time-invariance} for a discussion of why we restrict attention to coarse-grained attribution times.

\paragraph{Block selection rule.} We define block selection rules $(\sZ_\lambda)$ and coarse-grained time sets $(\cT_\lambda)$, and show that they satisfy anytime soundness against adversaries time-aligned with $(\cT_\lambda)$.

\begin{definition}[Block selection and time blocks]
\label{def:block-selection}
Fix a block length $n \in \NN$. For any prompt $x \in \{0,1\}^*$, response prefix $\rho \in \{0,1\}^*$, and response suffix $\zeta \in \{0,1\}^*$, the \emph{block selection rule} $\sZ_n$ is defined by
\begin{align*}
    \sZ_n(x, \rho, \zeta) = (\len(\rho) \equiv 0 \mod{n})\;\wedge\; (\len(\zeta) = n)\;.
\end{align*}

The associated set of time blocks $\cT_n \subset \cT$ is
\begin{align*}
    \cT_n = \{(i,j) \in \cT \mid i \in \NN,\; j \equiv 0 \pmod{n}\}\;.
\end{align*}
\end{definition}

This follows the standard cryptographic convention of operating on \emph{blocks} (i.e., fixed-length strings). Note that the block selection rule does not depend on the prompt $x$, which is natural since under the uniform language model the response distribution is invariant to the prompt.

When the block size scales as $n=\poly(\lambda)$, the corresponding Hamming-robust attribution mechanism satisfies anytime soundness (Claim~\ref{claim:block-selection-soundness}), a property necessary for the existence of faithful watermarking schemes.
Thus, the ideal attribution mechanism defined by block selection rules and the Hamming predicate is a valid target for undetectable and faithful watermarking schemes. Ideal PRCs~\cite{alrabiah2024ideal} precisely yield undetectable and faithful watermarking schemes for this ideal attribution mechanism under the uniform language model.

\begin{claim}[Anytime soundness under block selection]
\label{claim:block-selection-soundness}
Let $n = \poly(\lambda) \ge \lambda$, let $\cT_\lambda \subset \cT$ denote the coarsened time set with granularity $n$, consisting of time blocks of length $n$, let $\sZ_\lambda$ be the block selection rule with block length $n$. For any fixed constant $\gamma < 1/2$, let $\Phi = (\Ham_{\gamma n})_{n \in \NN}$.

Suppose $m = \poly(\lambda)$, and each transcript of the ledger-generating interaction has response length $\ell = n m$. Then, the corresponding ideal attribution mechanism with $(\trAtt_\lambda^{\Phi})$ for the uniform language model $\cU$ satisfies anytime soundness with respect to $\cT_n$.
\end{claim}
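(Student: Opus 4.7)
The plan is to reduce anytime soundness to a standard union bound over fresh uniform blocks. I would first unpack the structure of $\trAtt_\lambda$ under $\sZ_n$: by Proposition~\ref{prop:selection-attribution-equiv}, on any transcript $\pi=(x,u)$ the block selection rule picks exactly the disjoint length-$n$ windows $u_{(an+1):(a+1)n}$ with $0 \le a < m$, so $|\trAtt_\lambda(\Pi_T)| \le qm = \poly(\lambda)$. Because $\Ham_{\gamma n}$ requires equal-length inputs, any $\zeta$ witnessing $\Attr_s^\Phi(\zeta) < \Attr_T^\Phi(\zeta)$ must satisfy $|\zeta| = n$, and the witnessing block $B \in \trAtt_\lambda(\Pi_T) \setminus \trAtt_\lambda(\Pi_s)$ must consist entirely of bits sampled strictly after the block-aligned time $s$.

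The key step is to argue that, despite $s$ and $T$ being adaptive stopping times, each post-$s$ block is uniform on $\{0,1\}^n$ and independent of $\zeta$. I would make this precise by a coupling: without loss of generality pad $\cB$ to make exactly $q$ sampling queries, and identify the interaction with a tape of $qm$ i.i.d.\ uniform $n$-bit blocks $B_1,\ldots,B_{qm}$ read in order. This coupling is exact under $\cU$, since $\cU$ ignores prompts and outputs i.i.d.\ uniform bits. Under this coupling, $s$ is a stopping time of the block filtration, and if $k_s$ denotes the number of blocks revealed by time $s$, then $\zeta$ is measurable with respect to $(B_1,\ldots,B_{k_s})$, while every block in $\trAtt_\lambda(\Pi_T) \setminus \trAtt_\lambda(\Pi_s)$ lies among $B_{k_s+1},\ldots,B_{qm}$, hence is uniform on $\{0,1\}^n$ and independent of $\zeta$.

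For any fixed $\zeta \in \{0,1\}^n$ and uniform $B \in \{0,1\}^n$, the standard Hamming-ball volume estimate gives $\Pr[\Ham_{\gamma n}(\zeta,B) = 1] \le 2^{-(1 - H(\gamma))n}$, where $H$ is the binary entropy and $1 - H(\gamma) > 0$ since $\gamma < 1/2$. Union bounding over the at most $qm$ post-$s$ blocks yields
\begin{align*}
\Pr\big[\Attr_s^\Phi(\zeta) < \Attr_T^\Phi(\zeta)\big]
\;\le\; qm \cdot 2^{-(1-H(\gamma))n}
\;\le\; \poly(\lambda) \cdot 2^{-\Omega(\lambda)}
\;=\; \negl(\lambda),
\end{align*}
using $n \ge \lambda$.

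The main obstacle I anticipate is exactly the adaptive stopping structure: a naive argument might try to union bound over \emph{all} blocks in $\trAtt_\lambda(\Pi_T)$, conflating pre-$s$ blocks (on which $\zeta$ may depend arbitrarily, and for which the Hamming-ball bound fails) with post-$s$ blocks (for which it holds). The coupling above cleanly segregates the two at the cutoff $k_s$, after which the prompt-independence of $\cU$ reduces everything to a static union bound. No other part of $\cB$'s strategy enters the analysis, because $\cU$ is invariant to prompts by construction.
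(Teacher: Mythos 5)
Your proposal is correct and follows essentially the same route as the paper's proof: under $\cU$ every block completed after the block-aligned output time $s$ is uniform on $\{0,1\}^n$ and independent of $\zeta$, the Hamming-ball volume bound gives a $2^{-(1-H(\gamma))n}$ per-block collision probability, and a union bound over the $\poly(\lambda)$ blocks yields $\negl(\lambda)$. Your explicit coupling with a pre-sampled tape of i.i.d.\ blocks merely spells out the stopping-time/adaptivity point that the paper's sketch (``condition on any transcript prefix up to a block boundary'') leaves implicit.
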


\begin{proof}
Condition on any transcript prefix up to a block boundary. Under the uniform language model, the next block is uniformly distributed over $\{0,1\}^n$ and is independent of past blocks. For any block $\zeta \in \{0,1\}^n$ that is not in $\Attr_s$, the event that $\zeta$ is $\Phi$-close to the next attribution set (i.e., is a soundness violation) at that time has probability at most $\negl(\lambda)$.

There are $m$ blocks per transcript and at most $\poly(\lambda)$ transcripts overall. A union bound over the at most $m\cdot r = \poly(\lambda)$ block steps shows that the probability of \emph{any} violation across all transcripts and block times is still $\negl(\lambda)$.
\end{proof}

\paragraph{Ideal PRCs as watermarking schemes for $\cU$.}
Ideal PRCs are error-correcting codes whose codewords are computationally indistinguishable from the uniform distribution over $\{0,1\}^n$. 
Known constructions~\cite{alrabiah2024ideal} support robust decoding under perturbations in the Hamming metric (i.e., substitution errors).

\begin{definition}[Ideal encoder and decoder]
\label{def:ideal-encoder-decoder}
Let $n = n(\lambda)$ denote the codeword size and $k = k(\lambda)$ denote the message size.  
The experiment maintains a global state, called the ledger $\Pi$, which logs all encoder outputs and their associated messages. 
The \emph{ideal encoder and decoder} are oracles $(\cU_n, \cR^\Phi)$ defined as follows.
\begin{itemize}
    \item $\cU_n(\sigma)$: takes in a message $\sigma \in \{0,1\}^k$ and outputs a uniformly random block $y \in \{0,1\}^n$. The experiment appends the pair $(y, \sigma)$  to the ledger~$\Pi$.
    \item $\cR^\Phi(\zeta)$: takes in a block $\zeta \in \{0,1\}^n$ and outputs a message $\sigma \in \{0,1\}^k$ if there exists $(y, \sigma) \in \Pi$ such that $\zeta \in \Phi(y)$; otherwise, it returns $\bot$.
\end{itemize}
\end{definition}

\begin{definition}[Ideal pseudorandom code]
\label{def:ideal-prc}
For any predicate $\Phi$, an \emph{ideal $\Phi$-robust pseudorandom code} with codeword size $n = n(\lambda)$, message size $k = k(\lambda)$ is a triple of PPT algorithms $(\Gen, \Enc, \Dec)$ such that
\begin{itemize}
    \item $\Gen(1^\lambda)$: generates random encoding and decoding keys $(\ek, \dk)$.
    \item $\Enc(\ek, \sigma)$: takes in a message $\sigma \in \{0,1\}^k$, and outputs a random codeword $\xi \in \{0,1\}^n$.
    \item $\Dec(\dk, \xi)$: takes in a codeword $\xi \in \{0,1\}^n$, and outputs either a message $\sigma \in \{0,1\}^k$ or $\bot$, which denotes decoding failure.
\end{itemize}
and satisfy \emph{ideal security}: for any PPT adversary $\cA$,
\begin{align*}
    \Big|
    \Pr[\cA^{\cU_n, \Attr^\Phi}(1^\lambda) = 1]
    -
    \Pr_{(\ek,\dk) \gets \Gen(1^\lambda)}[\cA^{\Enc_\ek, \Dec_\dk}(1^\lambda) = 1]
    \Big|
    \le \negl(\lambda)\;,
\end{align*}
where $(\cU_n, \Attr^\Phi)$ is the $\Phi$-robust ideal encoder--decoder pair defined in \refDef{def:ideal-encoder-decoder}.
\end{definition}

In words, ideal security means that no efficient adversary can distinguish whether it is interacting with the PRC encoder and decoder $(\Enc_\ek, \Dec_\dk)$ or with their idealized counterparts $(\cU_n, \Attr^\Phi)$. We refer to the special case $k = 0$ as a zero-bit PRC~\cite[Section 1.2]{christ2024pseudorandom}. In this case, the message space reduces to the singleton set $\{\square\}$ consisting of the empty string.

A zero-bit ideal PRC is precisely an undetectable and faithful watermarking scheme for the ideal attribution mechanism defined by the block selection rule.
To make this correspondence explicit, let $\PRC = (\Gen, \Enc, \Dec)$ be a zero-bit PRC. The corresponding watermarking scheme $(\Gen,\Wat,\Ver)$ uses the same key generation algorithm $\Gen$, with watermarking keys defined as $\sk = \ek$ and $\pk = \dk$. The remaining components are defined as
\begin{align}
\Wat_{\sk} = \Enc_{\ek}(\square)
\quad\text{and}\quad 
\Ver_{\pk}(\zeta) = \one\left[\Dec_\dk(\zeta) \neq \bot\right]\;. \label{eq:ideal-prc-watermark}
\end{align}

Each response block of $\Wat_\sk$ is a fresh PRC codeword. 
Ideal security of the PRC implies both the \emph{undetectability} of $\Wat_\sk$ and its \emph{faithfulness} to the target attribution mechanism. 
Specifically, ideal security guarantees that PRC codewords are pseudorandom, so the responses of $\Wat_\sk$ are computationally indistinguishable from responses from $\cU_n$. Moreover, it implies faithfulness as shown in~\refLem{lem:ideal-security-faithfulness}. Note that the interacting PPT adversary $\cA$ can, in principle, maintain the ledger of the interaction explicitly and thus simulate the ideal decoder $\Attr^\Phi$ on its own. Hence, oracle access to $\Attr^\Phi$ is actually redundant.

\begin{lemma}[Ideal security implies faithfulness]
\label{lem:ideal-security-faithfulness}
Assume ideal security, i.e., for any PPT $\cA$,
\begin{align*}
\big|
\Pr[\cA^{\Wat_\sk,\Ver_\pk}(1^\lambda)=1]
-
\Pr[\cA^{\cU_n,\Attr^\Phi}(1^\lambda)=1]
\big|
\le \negl(\lambda)\;.
\end{align*}

Then for any PPT $\cA$,
\begin{align*}
\big|
\Pr[\cA^{\Wat_\sk,\Attr^\Phi}(1^\lambda)=1]
-
\Pr[\cA^{\Wat_\sk,\Ver_\pk}(1^\lambda)=1]
\big|
\le \negl(\lambda)\;.
\end{align*}
\end{lemma}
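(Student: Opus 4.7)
The plan is a short triangle-inequality argument that inserts the hybrid $(\Wat_\sk,\Attr^\Phi)$ between the two worlds compared by ideal security. Fix any PPT $\cA$ and let
\begin{align*}
p_1 &= \Pr[\cA^{\Wat_\sk,\Attr^\Phi}(1^\lambda) = 1]\;, \\
p_2 &= \Pr[\cA^{\Wat_\sk,\Ver_\pk}(1^\lambda) = 1]\;, \\
p_3 &= \Pr[\cA^{\cU_n,\Attr^\Phi}(1^\lambda) = 1]\;.
\end{align*}
Applying the ideal security hypothesis directly to $\cA$ gives $|p_2 - p_3| \le \negl(\lambda)$, so it suffices to show $|p_1 - p_3| \le \negl(\lambda)$ and finish by triangle inequality.

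To bound $|p_1 - p_3|$, I would build a reduction $\cA'$ that exploits the fact that $\Attr^\Phi$ is a deterministic, efficiently computable function of the encoder outputs observed so far. Concretely, $\cA'$ takes a pair of oracles $(\cO_1,\cO_2)$, initializes a local ledger $\widehat{\Pi} := \emptyset$, and runs $\cA$ internally. Whenever $\cA$ issues an encoding query, $\cA'$ forwards it to $\cO_1$, appends the returned codeword to $\widehat{\Pi}$, and relays the answer to $\cA$. Whenever $\cA$ issues a verification query $\zeta$, $\cA'$ \emph{ignores} $\cO_2$ and answers $\one[\exists\, y \in \widehat{\Pi} : \Phi(y,\zeta) = 1]$, which is precisely the rule defining $\Attr^\Phi$ in~\refDef{def:ideal-encoder-decoder}. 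Because $\Phi$ is efficiently computable and $\widehat{\Pi}$ has polynomial size, $\cA'$ is PPT.

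By~\refDef{def:ideal-encoder-decoder}, entries are appended to the experiment's ledger $\Pi$ only when the encoder oracle is invoked, and in both worlds of ideal security those invocations are exactly the queries relayed by $\cA'$. Hence $\widehat{\Pi}$ and $\Pi$ agree throughout the execution, so $\cA'^{\Wat_\sk,\Ver_\pk}$ produces for $\cA$ the same joint view as $(\Wat_\sk,\Attr^\Phi)$, and $\cA'^{\cU_n,\Attr^\Phi}$ produces the same view as $(\cU_n,\Attr^\Phi)$. Therefore $\Pr[\cA'^{\Wat_\sk,\Ver_\pk}(1^\lambda)=1] = p_1$ and $\Pr[\cA'^{\cU_n,\Attr^\Phi}(1^\lambda)=1] = p_3$. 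Invoking ideal security on $\cA'$ gives $|p_1 - p_3| \le \negl(\lambda)$, and combining with $|p_2 - p_3| \le \negl(\lambda)$ concludes the proof.

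I do not anticipate a genuine obstacle: the entire argument rests on the single observation that the adversary sees every encoder output and can therefore recompute $\Attr^\Phi$ for itself. The only item to check carefully is that $\cA'$ remains polynomial time, which follows from the efficient computability of $\Phi$ together with the polynomial bound on $\cA$'s oracle queries.
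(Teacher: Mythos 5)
Your proposal is correct and follows essentially the same route as the paper: a triangle inequality through the hybrid world $(\cU_n,\Attr^\Phi)$, resting on the observation that $\Attr^\Phi$ is a deterministic, efficiently computable function of the encoder outputs the adversary has already seen, so it can be simulated locally. The only difference is cosmetic: where the paper asserts that the ledgers of $\Wat_\sk$ and $\cU_n$ are computationally indistinguishable and that the $\Attr^\Phi$ oracle adds no power, you formalize that step as an explicit reduction $\cA'$ (which ignores its second oracle) and invoke ideal security a second time, which is exactly the argument the paper leaves implicit.
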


\begin{proof}
Fix any PPT adversary $\cA$ and define the Bernoulli variables
\begin{align*}
H_0 &= \cA^{\Wat_\sk,\Ver_\pk}(1^\lambda)\;, &
H_1 &= \cA^{\cU_n,\Attr^\Phi}(1^\lambda)\;, &
H_2 &= \cA^{\Wat_\sk,\Attr^\Phi}(1^\lambda)\;.
\end{align*}

By ideal security of the PRC,
\begin{align*}
|\Pr[H_0=1]-\Pr[H_1=1]|\le \negl(\lambda)\;.
\end{align*}

The ledgers generated by $\Wat_\sk$ and $\cU_n$ are computationally indistinguishable. Moreover, a PPT adversary can simulate each ideal attribution function directly from the ledger it maintains during interaction, so oracle access to $\Attr^{\Phi}$ offers no additional power. Therefore,
\begin{align*}
    |\Pr[H_1=1]-\Pr[H_2=1]|\le \negl(\lambda)\;.
\end{align*}

By the triangle inequality,
\begin{align*}
\big|\Pr[H_0=1]-\Pr[H_2=1]\big|
&\le 
\big|\Pr[H_0=1]-\Pr[H_1=1]\big|
+
\big|\Pr[H_1=1]-\Pr[H_2=1]\big| \le \negl(\lambda)\;.
\end{align*}
\end{proof}

From~\refLem{lem:ideal-security-faithfulness}, we obtain the following corollary.
\begin{corollary}
\label{cor:ideal-prc-watermarking}
Let $\Phi$ be any efficiently computable predicate. 
The watermarking scheme constructed from a zero-bit $\Phi$-robust ideal PRC, as defined in~\refEq{eq:ideal-prc-watermark}, is undetectable under the uniform language model $\cU$, and faithful with respect to the $\Phi$-robust attribution mechanism induced by the block selection rule (\refDef{def:block-selection}).
\end{corollary}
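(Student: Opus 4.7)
The plan is to reduce both claims directly to the ideal security of the underlying zero-bit PRC (\refDef{def:ideal-prc}), invoking \refLem{lem:ideal-security-faithfulness} for the faithfulness part. Since the corollary is essentially a packaging of the lemma once the uniform language model is identified with the ideal encoder oracle, most of the work is to check that the two distinguishability statements transfer with only a mild restriction on the distinguisher's access.

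For \emph{undetectability}, I would first observe that under the uniform language model $\cU \equiv 1/2$, the autoregressive sampling oracle $\Response$ produces, for each $n$-bit response block, an independent uniform sample from $\{0,1\}^n$, so its block-by-block behavior is identical to that of the ideal encoder oracle $\cU_n$. On the other hand, $\Wat_\sk$ emits $\Enc_\ek(\square)$ on each call, a fresh PRC codeword. Ideal security gives computational indistinguishability of the pair $(\Enc_\ek, \Dec_\dk)$ from $(\cU_n, \Attr^\Phi)$, and a trivial restriction of the distinguisher to its first oracle alone yields the required indistinguishability of $\Wat_\sk$ from $\Response$.

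For \emph{faithfulness}, I would invoke \refLem{lem:ideal-security-faithfulness}, which already upgrades ideal security to indistinguishability of $(\Wat_\sk, \Ver_\pk)$ from $(\Wat_\sk, \Attr^\Phi)$. Suppose for contradiction there is a PPT adversary $\cA$ that, with noticeable probability $1/p(\lambda)$, outputs $\zeta$ at some output time $s$ with $\Ver_\pk(\zeta) \neq \Attr_s^\Phi(\zeta)$. I would construct a distinguisher $\cA'$ that internally runs $\cA$, explicitly tracks the simulated ledger $\Pi_s$ from its own view of the interaction, and upon $\cA$ producing $(\zeta, s)$ pauses, queries its second oracle on $\zeta$ at time $s$, computes $\Attr_s^\Phi(\zeta)$ directly from the stored ledger, and outputs the indicator of disagreement. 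In the real world (oracle $\Ver_\pk$), $\cA'$ outputs $1$ with probability at least $1/p(\lambda)$; in the ideal world (oracle $\Attr^\Phi$), the oracle returns exactly $\Attr_s^\Phi(\zeta)$ by definition, so the indicator is identically $0$. This noticeable gap contradicts \refLem{lem:ideal-security-faithfulness}.

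The main obstacle I anticipate is a bookkeeping one: the reduction must respect the time convention implicit in faithfulness, namely that $\Ver_\pk(\zeta)$ and $\Attr_s^\Phi(\zeta)$ are both evaluated at the same query time $s$ at which $\zeta$ is produced. Under the block-aligned attribution convention of \refSc{sec:uniform-language-model}, this alignment is natural: $s \in \cT_n$ falls at a block boundary, so $\cA'$ can pause the simulated interaction there, issue its verification query, and then resume $\cA$'s subsequent interaction if needed. I would also need to check that the PPT adversary's explicit simulation of $\Attr^\Phi$ from the ledger is efficient, which follows from the assumption that $\trAtt_\lambda$ and $\Phi$ are efficiently computable together with the fact that the ledger maintained during the interaction has polynomial size. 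The anytime-soundness prerequisite for faithfulness to be achievable at all is supplied by Claim \ref{claim:block-selection-soundness}, so no further work is required on that front.
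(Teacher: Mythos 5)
Your proposal is correct and follows essentially the same route as the paper: undetectability comes from the pseudorandomness of PRC codewords under the uniform model (restricting the ideal-security distinguisher to the encoder oracle), and faithfulness comes from \refLem{lem:ideal-security-faithfulness}, using the fact that the adversary can maintain the interaction ledger and thereby evaluate $\Attr_s^\Phi$ itself. The only difference is that you spell out explicitly the disagreement-indicator distinguisher that the paper leaves implicit, which is a faithful filling-in rather than a different argument.
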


\begin{remark}[Variable-length string]
For candidate strings $\zeta \in \{0,1\}^*$ longer than the minimal block size $n$, attribution can be performed by applying a length-$n$ sliding window over $\zeta$ and checking each block individually.
\end{remark}

\subsection{Watermarking general language models}
\label{sec:watermark-generic-construction}

We consider PRC-based watermarking schemes for general language models proposed by Christ and Gunn~\cite{christ2024pseudorandom}. 
In \refThm{thm:watermarking-general}, we show that the scheme satisfies undetectability and robustness guarantees with respect to an ideal attribution mechanism defined by the selection rule based on predictive potential (\refDef{def:predictive-potential}).
The scheme, however, fails to satisfy soundness against black box adversaries and therefore falls short of faithfulness (see~\refEx{example:conservative-exploit}).

\begin{definition}[Predictive potential]
\label{def:predictive-potential}
Let $Q : \{0,1\}^* \to [0,1]$ be a language model, $\rho \in \{0,1\}^*$ a context, and $y \in \{0,1\}^n$ a response block. 
The \emph{predictive potential} of $y$ with respect to model $Q$ and context $\rho$ is defined as
\begin{align*}
    \mathsf{B}_n(y; \rho, Q)
    = \sum_{j=1}^n \left| \frac{1}{2} - Q(y_j \mid \rho y_{<j}) \right|\;.
\end{align*}
\end{definition}

\begin{definition}[$\beta$-potential block selection]
Let $\beta \in (0, 1/2)$ be a fixed constant, and let $n \in \NN$ denote the block length. 
The \emph{$\beta$-potential selection rule} is defined as the conjunction of the block selection rule $\sZ_n^{\mathsf{block}}$ (\refDef{def:block-selection}) and a bounded predictive potential condition:
\begin{align*}
    \sZ_n(x, \rho, \zeta; Q)
    = \sZ_n^{\mathsf{block}}(x, \rho, \zeta)
    \wedge \bigl(\mathsf{B}_n(\zeta ; x\rho, Q) \le \beta n\bigr)\;.
\end{align*}
\end{definition}

When $Q = \cU$, the potential-based selection rule reduces to the block selection rule~$\sZ_n^{\mathsf{block}}$.

\begin{theorem}[PRC-based watermarking under ideal PRCs]
\label{thm:watermarking-general}
Let $Q$ be any language model, and let $\Phi$ be an efficiently computable predicate. 
Let $\gamma \in [0, 1/4)$ be a constant, and let $\PRC[\Gen, \Enc, \Dec]$ be a zero-bit ideal PRC that is robust with respect to the composite predicate~$\Phi \circ \Ham_{\gamma n}$.

Then, for any constant $\beta \ge 0$ satisfying $\beta < \gamma$, or $\beta = \gamma = 0$, the $\PRC$-based watermarking scheme $(\Gen, \Wat, \Ver)$ (Algorithm~\ref{alg:wat}) is 
\emph{undetectable} and \emph{robust} with respect to the ideal $\Phi$-robust attribution mechanism defined by the $\beta$-potential selection rule.
\end{theorem}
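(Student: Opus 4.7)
The plan is to decouple the two guarantees: undetectability reduces cleanly to ideal security of the PRC combined with the marginal-preserving design of the PRC-to-token coupling inside $\Wat$; robustness instead requires an additional concentration argument tying the predictive potential $\mathsf{B}_n$ of a generated block to the Hamming distance between that block and its underlying PRC codeword.

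\textbf{Undetectability.} Fix any PPT distinguisher $\cA$. Introduce the hybrid in which every PRC codeword consumed internally by $\Wat_\sk$ is replaced by a fresh uniform block drawn from $\cU_n$. Ideal security of the PRC (\refDef{def:ideal-prc}) bounds the distinguishing advantage between the real and hybrid experiments by $\negl(\lambda)$. In the hybrid, at each token position the coupling inside Algorithm~\ref{alg:wat} consumes a uniform bit $\xi_j$ and emits $y_j$ with the correct marginal $Q(\cdot \mid x\rho y_{<j})$; iterating token by token, the hybrid oracle is distributed identically to $\Response$, so a triangle inequality closes the argument.

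\textbf{Robustness via a Hamming-concentration lemma.} Fix a PPT adversary $\cA$ producing output $\zeta$ at time $s$ with $\Attr_s^\Phi(\zeta) = 1$. Unpacking the $\beta$-potential selection rule, some block $y$ in the ledger $\Pi_s$ must satisfy $\mathsf{B}_n(y; x\rho, Q) \le \beta n$ together with $\Phi(y,\zeta)=1$. The technical core of the proof is the following lemma: conditional on $\{\mathsf{B}_n(y;x\rho,Q) \le \beta n\}$, the block $y$ and its underlying PRC codeword $\xi$ satisfy $\|y - \xi\|_0 \le \gamma n$ with overwhelming probability. I would establish it by first passing to the hybrid in which $\xi$ is uniform (justified by ideal security) and then defining the martingale $M_j = \sum_{i \le j}\bigl(\one[y_i \ne \xi_i] - \Pr[y_i \ne \xi_i \mid y_{<i}]\bigr)$ in the natural filtration of the coupling. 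The coupling in Algorithm~\ref{alg:wat} is designed so that $\Pr[y_i \ne \xi_i \mid y_{<i}] = |1/2 - Q(1 \mid x\rho y_{<i})|$, so the cumulative compensator equals exactly $\mathsf{B}_n(y; x\rho, Q)$. Azuma's inequality for the bounded-increment martingale $M_j$ then yields $\|y - \xi\|_0 \le \mathsf{B}_n(y; x\rho, Q) + O(\sqrt{n \log \lambda})$ except with probability $\negl(\lambda)$; a union bound over the $\poly(\lambda)$ blocks in the ledger preserves this. Choosing $n = \poly(\lambda)$ large enough and using $\beta < \gamma$ gives $\|y - \xi\|_0 \le \gamma n$, while the corner case $\beta = \gamma = 0$ follows because the coupling forces $y_j = \xi_j$ whenever $Q(1 \mid \cdot) = 1/2$. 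Given the lemma, $\zeta \in \Phi(y) \subseteq \Phi\circ\Ham_{\gamma n}(\xi)$, so the ideal decoder $\Attr^{\Phi \circ \Ham_{\gamma n}}$ accepts $\zeta$; one last invocation of ideal security replaces the ideal decoder with $\Dec_\dk$, yielding $\Ver_\pk(\zeta) = 1$ except with negligible probability.

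\textbf{Main obstacle.} The concentration lemma is the technical crux of the argument. Its clean formulation rests on the per-token identity $\Pr[y_i \ne \xi_i \mid y_{<i}] = |1/2 - Q(1 \mid \cdot)|$ dictated by the specific Christ--Gunn coupling in Algorithm~\ref{alg:wat}; if the coupling only guaranteed this quantity up to an absolute constant factor $c$, the proof would still go through but would restrict the admissible constants to $c\beta < \gamma$. All other steps, namely the two hybrid arguments and the union bound over blocks, are routine applications of ideal security and standard concentration.
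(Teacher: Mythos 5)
Your proposal is correct and follows essentially the same route as the paper: undetectability via the pushforward/marginal-preserving property of the embedding plus pseudorandomness of the codewords, and robustness by showing that under the $\beta$-potential selection rule the embedding's substitution noise relative to the source codeword stays below $\gamma n$, so the PRC's robustness with respect to $\Phi \circ \Ham_{\gamma n}$ makes $\Ver_\pk$ accept. The only difference is the concentration step: the paper invokes Lemma~\ref{lem:concentration}, a multiplicative Chernoff bound for independent Bernoulli variables with $\sum_i p_i \le \beta n$, whereas you use an Azuma-type martingale bound whose compensator is exactly $\mathsf{B}_n(y;\rho,Q)$; your version makes explicit the path-dependence of the flip probabilities (and the conditioning on the selection event) that the paper's independent-Bernoulli formulation idealizes, at the cost of a slightly weaker exponent, and both suffice for the theorem.
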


We now describe the construction of the PRC-based watermarking scheme. 
The only modification from the uniform language model case in~\refSc{sec:uniform-language-model} lies in the watermarking algorithm~$\Wat$; 
the key generation~$\Gen$ and verification~$\Ver$ remain unchanged (see~\refEq{eq:ideal-prc-watermark}). 
The central component of $\Wat$ is a \emph{randomized} embedding~\cite[Algorithm~2]{christ2024pseudorandom}
\begin{align}
\label{eqn:embedding-map}
    \Embed:\{0,1\} \times [0,1] \to \{0,1\}\;,
\end{align}
which maps a source bit to a watermarked bit.

\paragraph{Embedding guarantees and undetectability.} The embedding satisfies the following properties.
Let $Q:\{0,1\}^*\to[0,1]$ be the language model and let $\rho\in\{0,1\}^*$ be the preceding context.
If $U\leftarrow \mathrm{Ber}(1/2)$ and $Y \leftarrow \Embed(U,Q(\rho))$, then
\begin{align*}
  \Pr[Y=1] &= Q(\rho)\;,\\
  \Pr[Y=U] &= 1-|Q(\rho)-1/2|\;.
\end{align*}

Observe that the pushforward of the uniform distribution over~$\{0,1\}^n$ under the autoregressive application of~$\Embed$ starting from context $\rho$ satisfies
\begin{align*}
    \Embed_{\sharp} \cU_n = \bar{Q}_n(\rho)\;,
\end{align*}
where $\cU_n$ is the uniform distribution over~$\{0,1\}^n$, and $\bar{Q}_n(\rho)$ is the distribution induced by autoregressive sampling from~$Q$ for~$n$ steps given context~$\rho$.
When the source bits are pseudorandom, the pushforward condition guarantees computational indistinguishability from~$\bar{Q}_n(\rho)$, and hence watermark undetectability.

\paragraph{Generation process of $\Wat_\sk$ and embedding noise.} $\Wat_\sk$ first samples a PRC codeword~$\xi \leftarrow \Enc(\square)$ of length~$n$, which provides the pseudorandom source bits for the embedding procedure~$\Embed$. 
The model then generates tokens autoregressively using $\Embed$, consuming one PRC bit at a time while updating the reference context~$\rho$. 
When the source PRC bits are exhausted, a new codeword is sampled to provide for the next block. A pseudocode description is provided in Algorithm~\ref{alg:wat}. 

The embedding introduces substitution noise relative to the source codeword $\xi$. The following lemma, a simple consequence of a multiplicative Chernoff bound (see, e.g.,~\cite[Theorem 2.3.1]{vershynin2018high}), shows that whenever the total expected noise is at most $\beta n$ (as ensured by the $\beta$-potential selection rule), the realized Hamming error remains below $\gamma n$ with overwhelming probability.

\begin{lemma}[Hamming error of embedding channel]
\label{lem:concentration}
Let $\beta < \gamma$ be fixed constants. Let $Z_1,\ldots,Z_n$ be independent Bernoulli variables such that $\EE[Z_i] = p_i$ with $\sum_{i=1}^n p_i \le \beta n$. Then there exists a constant $c = c(\beta,\gamma) > 0$ such that
\begin{align*}
   \Pr\left[ \sum_{i=1}^n Z_i \ge \gamma n \right] 
   \le e^{-c n}\;.
\end{align*}
\end{lemma}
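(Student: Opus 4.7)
The plan is a standard multiplicative Chernoff argument. Let $S = \sum_{i=1}^n Z_i$ with mean $\mu = \sum_i p_i$, which by hypothesis satisfies $\mu \le \beta n$. First I would invoke the elementary MGF bound $\EE[e^{tZ_i}] = 1 + p_i(e^t-1) \le \exp(p_i(e^t-1))$, valid for any $t \ge 0$, and multiply over independent $Z_i$ to obtain
\[
\EE[e^{tS}] \;\le\; \exp\bigl((e^t-1)\mu\bigr) \;\le\; \exp\bigl((e^t-1)\beta n\bigr).
\]

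Applying Markov's inequality to $e^{tS}$ then yields, for every $t > 0$,
\[
\Pr[S \ge \gamma n] \;\le\; \exp\bigl(-t\gamma n + (e^t-1)\beta n\bigr).
\]
The exponent is minimized at $t^\star = \ln(\gamma/\beta)$, which is well-defined and positive since $\gamma > \beta > 0$. Substituting gives the desired bound with the explicit rate
\[
c(\beta,\gamma) \;=\; \gamma \ln(\gamma/\beta) - (\gamma - \beta).
\]

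The only verification required is that $c(\beta,\gamma) > 0$. This follows from the elementary inequality $\ln r > 1 - 1/r$ for all $r > 1$, applied with $r = \gamma/\beta$ (equivalently, strict convexity of $x \mapsto x\ln x$). The degenerate case $\beta = 0$, permitted by the hypothesis $\beta < \gamma$, is immediate: $\sum_i p_i \le 0$ forces each $Z_i = 0$ almost surely, so the tail probability vanishes outright for any $\gamma > 0$. There is no real obstacle in this proof; the only mild subtlety is that the $Z_i$ are independent but not identically distributed and that the constraint on the mean is an inequality rather than an equality, so the cleanest presentation works directly with MGFs rather than reducing to a homogeneous Binomial$(n, \beta)$ comparison.
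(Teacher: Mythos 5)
Your proof is correct and matches the paper's approach: the paper does not spell out a proof but simply cites the multiplicative Chernoff bound (Vershynin, Theorem~2.3.1), and your MGF/Markov derivation with the optimizer $t^\star = \ln(\gamma/\beta)$ and rate $c(\beta,\gamma) = \gamma\ln(\gamma/\beta) - (\gamma-\beta)$ is exactly the standard proof of that cited bound. Your handling of the degenerate case $\beta = 0$, which the paper's hypotheses do permit, is a welcome bit of extra care.
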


Combining the pushforward property of the embedding (for undetectability) with the bounded noise guarantee of 
Lemma~\ref{lem:concentration} and the robustness of the ideal PRC 
immediately yields Theorem~\ref{thm:watermarking-general}.

\paragraph{Weakened faithfulness under general models.} If $Q(\rho) = 1/2$, then $Y = U$ with probability~$1$. Hence, if $Q = \cU$, the resulting watermarking scheme exactly recovers the scheme of~\refSc{sec:uniform-language-model}. For general language models this equality no longer holds; $\Embed$ introduces substitution noise in generating the response, so the target attribution mechanism must be more conservative.

In \refThm{thm:watermarking-general}, the target attribution is robust only with respect to $\Phi$, whereas the underlying PRC is robust with respect to 
the composite predicate $\Phi \circ \Ham_{\gamma n}$.
Thus robustness is preserved, but the guarantee against false positives is weakened. The next example shows how a black-box adversary can exploit this.

\begin{algorithm}[t]
    \caption{Watermarked response generation via $\Wat_\sk(\cdot)$}
    \label{alg:wat}
    \begin{algorithmic}[1]
        \STATE \textbf{input:} language model $Q$, security parameter $\lambda$, PRC encoding key $\ek$, prompt $x$
        \STATE $n \gets n(\lambda)$, $m \gets m(\lambda)$ \hfill \texttt{// block length and number of blocks per response}
        \STATE $\xi \leftarrow \PRC.\Encode_\ek(\square)$ \hfill \texttt{// sample initial PRC codeword of length $n$}
        \STATE $i,j,k \leftarrow 1$
        \WHILE{$k \le m$}
            \STATE \texttt{// tokenwise embedding step}
            \STATE \label{line:embed_1}$p_i \gets Q(xa_{1:i-1})$
            \STATE \label{line:embed_2}$a_i \leftarrow \Ber(p_i - (-1)^{\xi_j} \cdot \min\{p_i, 1 - p_i\})$ \hfill \texttt{// embed PRC bit into model response}
            \STATE $i \leftarrow i + 1$, $j \leftarrow j + 1$
            \IF{$j > n$}
                \STATE \texttt{// end of current PRC block}
                \STATE $\xi \leftarrow \PRC.\Encode_\ek(\square)$
                \STATE $j \leftarrow 1$, $k \leftarrow k + 1$
            \ENDIF
        \ENDWHILE
        \STATE \textbf{return} $a_{1:mn}$
    \end{algorithmic}
\end{algorithm}

\begin{example}[False positives under a conservative target]
\label{example:conservative-exploit}
Let $\delta > \gamma > 0$ be fixed constants, and assume that the underlying ideal PRC satisfies ideal security with respect to the predicate~$\Ham_{\delta n}$. 
Define $\Phi = \Ham_{(\delta - \gamma)n}$ so that $\Ham_{\delta n} = \Phi \circ \Ham_{\gamma n}$, as in \refThm{thm:watermarking-general}.

Consider a prompt $x \in \{0,1\}^*$ such that the conditional response distribution $Q(\cdot \mid x)$ is uniform. For example, $x =$ ``generate uniformly random bits: ''. 
For such prompts, $\Wat$ outputs an uncorrupted PRC codeword.
An adversary can sample a response $y \leftarrow \Wat_\sk(x)$, and apply a perturbation of magnitude~$(\delta - \gamma/2)n$ to obtain a perturbed response~$y'$. By construction, $y'$ remains within $\Ham_{\delta n}$ of the original PRC codeword and therefore satisfies $\Ver_{\pk}(y') = 1$. However, $y'$ lies outside the $\Phi$-expansion of $y$, and hence 
$\Attr_t^{\Phi}(y') = 0$ for the target attribution mechanism induced by the 
$\beta$-potential rule (for any $\beta \in [0,1]$).
\end{example}
\section{Beyond Black-Box Adversaries}
\label{sec:beyond-black-box}

We study faithfulness guarantees that serve as goals for future watermarking schemes. In particular, we focus on guarantees against \emph{white-box} adversaries, i.e., adversaries with direct access to the detection key~$\pk$ and therefore white-box access to the detector~$\Ver(\pk, \cdot)$. Such guarantees are essential for publicly verifiable watermarking schemes.

Faithfulness, however, may not be achievable for some target attribution mechanisms, even against black-box adversaries, as shown in~\refSc{sec:watermark-generic-construction}. 
In such cases, we proceed nonetheless by \emph{sandwiching} the semantics of the verifier~$\Ver$ between two ideal attribution mechanisms.
Specifically, we choose a pair of transcript-level attribution maps $(\trAtt, \trAtts)$ satisfying $\trAtt(x,y) \subseteq \trAtts(x,y)$ for all prompts~$x$ and responses~$y$.
The goal is to guarantee that
\begin{align*}
\Attr_s(\zeta) \le \Ver(\zeta) \le \Atts_s(\zeta)\;,
\end{align*}
where $s$ denotes the query time for~$\Ver(\zeta)$ and $(\Attr_t, \Atts_t)$ are the ideal attribution functions induced by $(\trAtt,\trAtts)$.
By designing these pairs to be semantically tight, we reduce ambiguity and sharpen the interpretation of~$\Ver(\zeta)$.

With this relaxed notion of faithfulness in hand, we next discuss the interpretations of the associated guarantees against its two distinct failure modes, \emph{unforgeability} and \emph{robustness}.
In Sections~\ref{sec:unforgeable-uniform} and~\ref{sec:unforgeable-general}, we focus on the unforgeability guarantees of existing schemes~\cite{christ2024pseudorandom,fairoze2023publicly}, which build on digital signatures. We emphasize that these guarantees are limited, as they lack robustness to perturbations of the signed response.
These observations motivate the search for a notion of \emph{robust} digital signature that remains verifiable under mild perturbations while retaining a suitably relaxed form of unforgeability. We leave this investigation to future work.

\paragraph{Unforgeability and robustness.} The failure event in relaxed faithfulness can be expressed as the union of two disjoint cases: false positives and false negatives,
\begin{align*}
    \neg \big(\Attr_s(\zeta) \le \Ver_\pk(\zeta) \le \Atts_s(\zeta)\big) = \big(\Ver_\pk(\zeta) > \Atts_s(\zeta)\big) \vee \big(\Ver_\pk(\zeta) < \Attr_s(\zeta)\big)\;.
\end{align*}

A false positive occurs when $\Ver_\pk$ accepts a string $\zeta$ not actually attributable under $\Attr_s$, for instance a string that is far from all responses generated by $\Wat_\sk$. If an adversary can reliably construct such strings, the scheme is vulnerable to \emph{forgery}. Forgeries undermine the credibility of watermarks, as they allow non-attributable text to be falsely flagged as watermarked.
While faithfulness ensures that honest users interacting with $\Ver_\pk$ in a black-box manner cannot produce such forgeries, malicious adversaries may attempt to exploit the public key $\pk$.

\emph{Unforgeability} guarantees that false positives cannot be produced even by adversaries with direct access to $\pk$. This ensures that the verifier cannot be fooled into accepting non-attributable strings, thereby strengthening the credibility of an attribution decision $\Ver_\pk(\zeta)=1$.

A false negative, by contrast, occurs when the verifier rejects a string that is genuinely attributable under~$\Attr_s$. Here we abuse notation slightly and write $\Attr_s$ for the robust attribution function (previously denoted $\Attr_s^\Phi$, with the predicate~$\Phi$ made explicit). \emph{Robustness} provides protection against such failures. While robustness can be ensured in the black-box setting, whether it can be achieved against white-box adversaries remains open. Existing constructions, even when specialized to the uniform language model, are provably non-robust against white-box adversaries~\cite[Footnote~7]{alrabiah2024ideal}. Addressing this limitation remains an interesting direction for future work.

\subsection{Unforgeable schemes for the uniform language model}
\label{sec:unforgeable-uniform}

We describe an unforgeable watermarking scheme for the uniform language model~$\cU$ obtained by composing a digital signature scheme with a multi-bit ideal PRC. The construction is a slight variant of the scheme introduced by Christ and Gunn~\cite{christ2024pseudorandom}. Further background on digital signature schemes is provided in~\refSc{sec:digital-sig}.

\paragraph{Construction of an unforgeable scheme.} Let $\DSS[\Gen, \Sign, \Ver]$ be a digital signature scheme with signature size $k$, and let $\PRC[\Gen, \Enc, \Dec]$ be an ideal PRC with respect to predicate~$\Phi$, with codeword size~$n$ and message size~$k$. 
We define the watermarking scheme $(\Gen, \Wat, \Ver)$ as follows. First, generate the component keys
\begin{align*}
    (\DSS.\pk, \DSS.\sk) &\leftarrow \DSS.\Gen(1^\lambda)\;,\\
    (\PRC.\ek, \PRC.\dk) &\leftarrow \PRC.\Gen(1^\lambda)\;.
\end{align*}

The watermark keys are then defined as
\begin{align*}
    \pk &= (\DSS.\pk, \PRC.\ek, \PRC.\dk)\;,\\
    \sk &= \DSS.\sk\;.
\end{align*}

Within a single transcript, $\Wat_\sk$ first samples a random message $\sigma_0 \leftarrow \{0,1\}^k$ and encodes it into the initial PRC block
\begin{align*}
    \xi_1 \leftarrow \PRC.\Enc(\PRC.\ek, \sigma_0)\;.
\end{align*}

Subsequent response blocks are then generated sequentially, embedding the signature of each response block into the next:
\begin{align*}
    \sigma_i &\leftarrow \DSS.\Sign(\DSS.\sk, \xi_i)\;, \\
    \xi_{i+1} &\leftarrow \PRC.\Enc(\PRC.\ek, \sigma_i)\;.
\end{align*}

Given a string $\zeta \in \{0,1\}^{2n}$, written as $\zeta = \zeta_1\zeta_2$ with each component block of length~$n$, verification proceeds by PRC-decoding the suffix and checking whether the recovered signature validates the prefix.
\begin{align*}
    \Ver(\pk, \zeta_1\zeta_2)
        &= \DSS.\Ver(\DSS.\pk, \zeta_1, \sigma)\;,\\[2pt]
    \text{where }\sigma &= \PRC.\Dec(\PRC.\dk, \zeta_2)\;.
\end{align*}

\paragraph{Target ideal mechanisms.} Let $\sZ_n^{\DSS}$ denote the following modified block selection rule 
\begin{align}
\label{eqn:unforgeable-block-selection}
    \sZ_n^{\DSS}(x,\rho,\zeta) = (\len(\rho) \equiv 0 \mod{n})\;\wedge\; (\len(\zeta) = 2n)\;.
\end{align}

Let $(\Attr_t)$ denote the corresponding sequence of (non-robust) ideal attribution functions.
The scheme is unforgeable with respect to the following induced attribution function:
\begin{align}
\label{eqn:unforgeable-target-attribution}
    \Atts_t(\zeta_1\zeta_2)
    = 1 \iff \exists u \in \{0,1\}^n \text{ such that } \Attr_t(\zeta_1 u) = 1\;.
\end{align}

Intuitively, $\Atts_t(\zeta) = 1$ if the first half of $\zeta \in \{0,1\}^{2n}$ appears as the prefix of some length-$2n$ response block within the transcript $\Pi_t$.

\paragraph{Robustness predicate.} Due to the fragility of the digital signature, the scheme achieves only a weakened form of robustness compared to the underlying PRC, captured by the following modified predicate~$\varphi$. For any strings $y, y' \in \{0,1\}^{2n}$, written as
$y = y_1 y_2$ and $y' = y_1' y_2'$ with $y_1, y_2, y_1', y_2' \in \{0,1\}^n$,
\begin{align}
\label{eqn:unforgeable-predicate}
    \varphi(y, y') = (y_1 = y_1') \wedge \Phi(y_2, y_2')\;.
\end{align}

Equivalently, in terms of predicate expansion,
\begin{align*}
    \varphi(y_1y_2) = \{y_1u \mid u \in \Phi(y_2)\}\;.
\end{align*}

Note that for any $\zeta \in \{0,1\}^{2n}$ and $t \in \cT$,
\begin{align*}
    \Attr_t^\varphi(\zeta) \le \Atts_t(\zeta)\;.
\end{align*}

Hence, $(\Attr_t^\varphi, \Atts_t)$ forms a valid envelope for potentially sandwiching $\Ver_\pk$.

The following theorem formalizes the guarantees of the watermarking scheme. 
Its proof follows directly from the strong unforgeability of the underlying digital signature scheme and the ideal security of the PRC.

\begin{theorem}[Unforgeable watermarking, uniform model]
\label{thm:unforgeable-uniform}
Let $\PRC[\Gen,\Enc,\Dec]$ be an ideal PRC with respect to predicate~$\Phi$, with codeword size $n$ and message size $k$, and let $\DSS[\Gen,\Sign,\Ver]$ be a digital signature scheme with signature size $k$. 

The watermarking scheme $(\Gen, \Wat, \Ver)$ described above is undetectable with respect to the uniform language model~$\cU$ and satisfies the following faithfulness guarantees.

\begin{description}    
    \item[Unforgeability:] For any white-box PPT adversary $\cA$ that outputs a string in $\{0,1\}^{2n}$,
    \begin{align*}
    \Pr\left[
    \begin{array}{rl}
        (\pk,\sk) &\leftarrow \Gen(1^\lambda)\\
        \zeta &\leftarrow \cA^{\Wat_\sk,\Ver_\pk}(1^\lambda, \pk)
    \end{array} 
    :
    \Ver(\pk, \zeta) > \Atts_s(\zeta)
    \right] \le \negl(\lambda)\;,
    \end{align*}
    where $s \in \cT$ denotes the time at which $\zeta$ is output (the adversary may continue interacting after this point), and $\Atts_s$ is the ideal attribution function defined in~\refEq{eqn:unforgeable-target-attribution}.

\item[Faithfulness:] For any black-box PPT adversary $\cA$ that outputs a string in $\{0,1\}^{2n}$,
\begin{align*}
    \Pr\left[
    \begin{array}{rl}
        (\pk,\sk) &\leftarrow \Gen(1^\lambda)\\
        \zeta &\leftarrow \cA^{\Wat_\sk,\Ver_\pk}(1^\lambda)
    \end{array} 
    :
        \Ver(\pk, \zeta) \neq \Attr_s^\varphi(\zeta)
    \right]
    \le \negl(\lambda)\;,
\end{align*}
where $s \in \cT$ denotes the time at which $\zeta$ is output (the adversary may continue interacting after this point), $\varphi$ denotes the modified predicate defined in~\refEq{eqn:unforgeable-predicate}, and $\Attr_s$ denotes the ideal attribution function induced by the modified block selection rule defined in \refEq{eqn:unforgeable-block-selection}.
\end{description}
\end{theorem}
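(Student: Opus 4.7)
\emph{Proof plan.} The plan is to establish each property by a reduction to the security of the underlying primitives, following the block-aligned query convention of~\refSc{sec:uniform-language-model} so that at any query time $s$ every block of $\Wat_\sk$'s output lies fully in $\Pi_s$.

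For \textbf{undetectability}, a standard hybrid argument replaces $\PRC.\Enc_\ek$ by the ideal encoder $\cU_n$ inside $\Wat_\sk$. In this hybrid each response block is a uniform $n$-bit string, so the full length-$\ell$ output is uniform on $\{0,1\}^\ell$, which coincides with the distribution of $\Response$ for $Q = \cU$. Ideal PRC security bounds the distinguishing advantage by $\negl(\lambda)$.

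For \textbf{unforgeability}, I reduce to sEUF-CMA of $\DSS$. Given a white-box $\cA$, construct a forger $\cB$: upon receiving $\DSS.\pk$ and signing-oracle access, $\cB$ samples $(\PRC.\ek, \PRC.\dk) \leftarrow \PRC.\Gen(1^\lambda)$, hands $\pk = (\DSS.\pk, \PRC.\ek, \PRC.\dk)$ to $\cA$, and internally simulates $\Wat_\sk$ by invoking the $\Sign$ oracle on each freshly generated $\xi_i$ to obtain $\sigma_i$ and then encoding $\sigma_i$ via $\PRC.\Enc_\ek$. It simulates $\Ver_\pk$ directly using its own keys and $\DSS.\pk$. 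When $\cA$ outputs $\zeta = \zeta_1 \zeta_2$, $\cB$ computes $\sigma \leftarrow \PRC.\Dec_\dk(\zeta_2)$ and submits $(\zeta_1, \sigma)$ as its forgery. The crux is that, under block alignment, $\cB$'s signing queries at time $s$ coincide with the first halves of the attributable length-$2n$ blocks in $\Pi_s$ (namely, the set $\{\xi_i : \xi_{i+1} \in \Pi_s\}$), so $\Atts_s(\zeta) = 0$ forces $\zeta_1$ to be an unqueried message, and $\cA$'s success translates directly into an sEUF-CMA forgery.

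For \textbf{faithfulness}, decompose the failure event into a false-positive case ($\Ver_\pk(\zeta) > \Attr_s^\varphi(\zeta)$) and a false-negative case ($\Ver_\pk(\zeta) < \Attr_s^\varphi(\zeta)$). In the false-positive case, switch to the ideal world via ideal PRC security (losing $\negl(\lambda)$); there, $\Ver_\pk(\zeta) = 1$ means $\PRC.\Dec$ outputs some $\sigma$ that verifies for $\zeta_1$. By sEUF-CMA this forces $\sigma = \sigma_i$ and $\zeta_1 = \xi_i$ for some queried index $i$, and the definition of the ideal decoder then forces $\zeta_2 \in \Phi(\xi_{i+1})$; combining, $\Attr_s^\varphi(\zeta) = 1$, contradicting $\Attr_s^\varphi(\zeta) = 0$. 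In the false-negative case, $\Attr_s^\varphi(\zeta) = 1$ supplies an attributable pair $(\xi_i, \xi_{i+1}) \in \Pi_s$ with $\zeta_1 = \xi_i$ and $\zeta_2 \in \Phi(\xi_{i+1})$; $\Phi$-robustness of the PRC gives $\PRC.\Dec_\dk(\zeta_2) = \sigma_i$, and DSS correctness then forces $\Ver_\pk(\zeta) = 1$, contradicting $\Ver_\pk(\zeta) = 0$.

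The \textbf{main obstacle} is establishing the exact correspondence between $\cB$'s signing queries and the first halves of attributable length-$2n$ blocks. This correspondence is immediate under block-aligned query times: each signing query $\xi_i$ made during the simulation occurs only after $\xi_{i+1}$ has been generated and output, so $\xi_{i+1} \in \Pi_s$. Absent block alignment, one would additionally need to invoke the pseudorandomness of PRC codewords to argue that partial knowledge of $\xi_{i+1}$ does not enable the adversary to forge at $\zeta_1 = \xi_i$ before $(\xi_i, \xi_{i+1})$ becomes attributable.
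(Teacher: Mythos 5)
Your proposal is correct and takes essentially the same route as the paper, which proves this theorem only via the one-line remark that it ``follows directly from the strong unforgeability of the underlying digital signature scheme and the ideal security of the PRC'': your reductions (ideal PRC security for undetectability and the ideal-world switch, sEUF-CMA of the DSS for unforgeability and false positives, PRC robustness plus DSS correctness for false negatives) instantiate exactly that argument. Your explicit handling of the block-alignment timing, identifying the signing queries made so far with the first halves of the attributable length-$2n$ blocks in $\Pi_s$, is a detail the paper leaves implicit but is consistent with its stated conventions.
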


\paragraph{Motivation for \emph{robust} digital signatures.}
Robustness of the above unforgeable scheme, represented by the predicate $\varphi$ (see~\refEq{eqn:unforgeable-predicate}), is unsatisfactory because $\varphi$ does not tolerate \emph{any} perturbations to the prefix block. This is a direct consequence of the inherent fragility of digital signatures. 
This limitation, also present in the unforgeable scheme of Fairoze et al.~\cite{fairoze2023publicly}, motivates the development of a \emph{robust} digital signature that remains verifiable under mild perturbations while supporting a modified notion of unforgeability. 
We conjecture that such primitives can be constructed using ideas from secure sketches~\cite{dodis2008fuzzy}.

\subsection{Unforgeable schemes for general language models}
\label{sec:unforgeable-general}

For general language models $Q$, we define the target attribution mechanism using a modified $\beta$-potential selection rule (\refDef{def:predictive-potential}). For any string $\zeta \in \{0,1\}^*$ that can be split into equal-length prefix and suffix blocks $\zeta_1$ and $\zeta_2$, define
\begin{align}
\label{eqn:unforgeable-selection-general}
    \sZ_n(x,\rho,\zeta_1\zeta_2) = \sZ_n^{\DSS}(x,\rho,\zeta_1\zeta_2) \;\wedge\; (\mathsf{B}_n(\zeta_1; x\rho, Q) \le \beta n) \;\wedge\; (\mathsf{B}_n(\zeta_2; x\rho\zeta_1, Q) \le \beta n)\;.
\end{align}

This rule restricts attribution to blocks whose prefix and suffix are sufficiently \emph{unpredictable} under~$Q$. The watermarking scheme targeting this attribution mechanism follows the same structure as the uniform case described in~\refSc{sec:unforgeable-uniform}, with one key modification: instead of signing the codeword $\xi_i$, the watermarking algorithm $\Wat_\sk$ signs the generated response block $y_i$ produced by the embedding map $\Embed$~\refEq{eqn:embedding-map}, and does so only if its predictive potential is low.

More precisely,
\begin{align*}
    \sigma_i \leftarrow
    \begin{cases}
        \DSS.\Sign(\DSS.\sk, y_i) & \text{if }\mathsf{B}_n(y_i; x\rho, Q) \le \beta n\;,\\[2pt]
        \{0,1\}^k & \text{otherwise}\;.
    \end{cases}
\end{align*}

When $Q = \cU$, the watermarking scheme and its target attribution mechanism reduce exactly to the uniform case analyzed in~\refSc{sec:unforgeable-uniform}. The following theorem formalizes the guarantees of the scheme. As before, no robustness guarantees are provided against perturbations to the prefix block, since even the upper envelope $\Atts_t$ for $\Ver_\pk$ excludes such perturbations.

\begin{theorem}[Unforgeable watermarking]
\label{thm:unforgeable-general}
Let $Q$ be any language model, and let $\Phi$ be an efficiently computable predicate. 
Let $\gamma \in [0,1/4)$ be a constant, and let $\PRC[\Gen,\Enc,\Dec]$ be an ideal PRC with respect to the composite predicate $\Phi \circ \Ham_{\gamma n}$, with codeword size $n$ and message size $k$. 
Let $\DSS[\Gen,\Sign,\Ver]$ be a digital signature scheme with signature size~$k$. 

Then, for any constant $\beta \ge 0$ satisfying $\beta < \gamma$, or $\beta = \gamma = 0$, the above watermarking scheme $(\Gen, \Wat, \Ver)$ is 
\emph{undetectable} and satisfies the following faithfulness guarantees.

\begin{description}    
    \item[Unforgeability:] For any white-box PPT adversary $\cA$ that outputs a string in $\{0,1\}^{2n}$,
    \begin{align*}
    \Pr\left[
    \begin{array}{rl}
        (\pk,\sk) &\leftarrow \Gen(1^\lambda)\\
        \zeta &\leftarrow \cA^{\Wat_\sk,\Ver_\pk}(1^\lambda, \pk)
    \end{array} 
    :
    \Ver(\pk, \zeta) > \Atts_s(\zeta)
    \right] \le \negl(\lambda)\;,
    \end{align*}
    where $s \in \cT$ denotes the time at which~$\zeta$ is output (the adversary may continue interacting after this point), and $\Atts_s$ is the ideal attribution function defined by~\refEq{eqn:unforgeable-target-attribution}, with~$\Attr_s$ defined by the modified $\beta$-potential selection rule in~\refEq{eqn:unforgeable-selection-general}.

\item[Robustness:] For any black-box PPT adversary~$\cA$ that outputs a string in $\{0,1\}^{2n}$,
\begin{align*}
    \Pr\left[
    \begin{array}{rl}
        (\pk,\sk) &\leftarrow \Gen(1^\lambda)\\
        \zeta &\leftarrow \cA^{\Wat_\sk,\Ver_\pk}(1^\lambda)
    \end{array} 
    :
        \Ver(\pk, \zeta) < \Attr_s^\varphi(\zeta)
    \right]
    \le \negl(\lambda)\;,
\end{align*}
where $s \in \cT$ denotes the time at which $\zeta$ is output (the adversary may continue interacting after this point), $\varphi$ is the modified predicate defined in~\refEq{eqn:unforgeable-predicate}, and $\Attr_s$ is the ideal attribution function induced by the modified $\beta$-potential selection rule in~\refEq{eqn:unforgeable-selection-general}.
\end{description}
\end{theorem}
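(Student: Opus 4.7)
The plan is to treat Theorem~\ref{thm:unforgeable-general} as the common generalization of Theorem~\ref{thm:watermarking-general} (the embedding layer) and Theorem~\ref{thm:unforgeable-uniform} (the DSS layer). Each of the three properties is obtained by combining one ingredient from each source, with a hybrid over blocks to accumulate the relevant cryptographic advantages.

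Undetectability follows by a block-wise hybrid. Within each block, the PRC source codeword encodes either a real DSS signature, a uniformly random $k$-bit string, or the random initializer $\sigma_0$; in all three cases, ideal PRC security implies the source block is computationally indistinguishable from $\cU_n$. Conditioning on uniform source bits, the pushforward identity $\Embed_\sharp \cU_n = \bar{Q}_n(\rho)$ from Section~\ref{sec:watermark-generic-construction} recovers the unwatermarked distribution exactly, so the adversary's total distinguishing advantage is $\poly(\lambda) \cdot \negl(\lambda) = \negl(\lambda)$. Unforgeability then reduces to strong unforgeability of the DSS: a reduction holding $\DSS.\pk$ and its signing oracle samples $\PRC.\ek, \PRC.\dk$ itself, simulates $\pk$ (entirely public verification material) together with the oracles $\Wat_\sk$ and $\Ver_\pk$, and converts a white-box output $\zeta = \zeta_1 \zeta_2$ with $\Ver_\pk(\zeta) = 1$ into the pair $(\zeta_1, \PRC.\Dec(\PRC.\dk, \zeta_2))$. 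The condition $\Atts_s(\zeta) = 0$, together with the scheme's signing rule, should force $\zeta_1$ never to have been queried to the simulated signing oracle, yielding a fresh strong forgery.

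Robustness uses the same chain as Theorem~\ref{thm:watermarking-general}, refined by one extra DSS-correctness step. Given $\zeta = \zeta_1 \zeta_2$ with $\Attr_s^\varphi(\zeta) = 1$, there is a ledger block $y_1 y_2$ attributable under the modified $\beta$-potential rule with $\zeta_1 = y_1$ and $\Phi(y_2, \zeta_2) = 1$. The low-potential condition on $y_1$ ensures $\Wat_\sk$ produced a genuine signature $\sigma = \DSS.\Sign(\DSS.\sk, y_1)$ embedded into $\xi_2 = \PRC.\Enc(\PRC.\ek, \sigma)$; the low-potential condition on $y_2$, combined with Lemma~\ref{lem:concentration}, yields $\|y_2 - \xi_2\|_0 \le \gamma n$ except with probability $e^{-cn}$. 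Composing the embedding noise with the adversary's $\Phi$-perturbation gives $(\xi_2, \zeta_2) \in \Phi \circ \Ham_{\gamma n}$, so ideal PRC robustness recovers $\sigma$ and DSS correctness yields $\Ver_\pk(\zeta) = 1$. A union bound over the at most $\poly(\lambda)$ block boundaries absorbs the concentration failure.

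The main obstacle is the alignment between the scheme's local signing decision (triggered by $\mathsf{B}_n(y_i) \le \beta n$ alone) and the joint low-potential condition embedded in the target selection rule (which also constrains the follow-up block). I expect the cleanest way to close this gap is either to tighten $\Wat_\sk$ so that it overwrites $\sigma_i$ with uniformly random bits whenever the subsequent block's potential turns out to exceed $\beta n$, or to relax the target $\Atts_s$ so that only the first-half potential is required. Either adjustment preserves undetectability because PRC encodings are pseudorandom for arbitrary messages, and makes the set of signed prefixes coincide with the first-block projection of $\trAtt(\Pi_s)$, which is precisely what the unforgeability reduction demands.
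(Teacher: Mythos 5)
Your architecture is the same as the paper's (the paper only sketches this proof): undetectability from ideal PRC pseudorandomness plus the pushforward identity for $\Embed$, block by block; robustness from the low-potential condition, \refLem{lem:concentration}, ideal-PRC robustness for $\Phi \circ \Ham_{\gamma n}$, and DSS correctness; unforgeability by a reduction that keeps the PRC keys, answers $\Wat_\sk$ queries with the DSS signing oracle, and turns an accepted $\zeta_1\zeta_2$ with $\Atts_s(\zeta)=0$ into the pair $(\zeta_1, \PRC.\Dec(\PRC.\dk,\zeta_2))$. One small simplification: plain EUF-CMA suffices here, since $\Atts_s(\zeta)=0$ is used to force $\zeta_1$ to be a message never queried to the signing oracle; strong unforgeability is not needed.

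The alignment obstacle you flag at the end is genuine and is the one real gap between the construction and the theorem as stated: the signing rule checks only $\mathsf{B}_n(y_i) \le \beta n$, while $\Atts_s$ built from \refEq{eqn:unforgeable-selection-general} additionally requires the successor block to be low-potential. For a model $Q$ in which the first block always has potential $\le \beta n$ but the second always has potential strictly between $\beta n$ and $\gamma n$, even a single honest sample $y_1y_2$ gives $\Ver_\pk(y_1y_2)=1$ with high probability (the realized embedding noise still lies within the decoding radius) while $\Atts_s(y_1y_2)=0$, so the unforgeability bound fails without an adjustment. Of your two proposed fixes, only the second is workable: you cannot ``overwrite $\sigma_i$ with random bits whenever the subsequent block's potential exceeds $\beta n$,'' because $\sigma_i$ is encoded into the very codeword from which that subsequent block is generated, and regenerating the block after observing its potential would skew the output distribution and threaten undetectability. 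Relaxing $\Atts_s$ to require only the prefix-block potential (so that it certifies exactly ``$\zeta_1$ is a signed ledger block'') preserves the envelope $\Attr_s^\varphi \le \Atts_s$, leaves the robustness direction untouched, and is precisely what the reduction to DSS unforgeability needs; with that reading, your proof goes through and matches the paper's intended argument.
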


\section*{Acknowledgements}
We thank Huijia (Rachel) Lin and Aloni Cohen for helpful discussions. This work was supported in part by the Simons Collaboration on the Theory of Algorithmic Fairness.
\newpage
\printbibliography
\newpage
\appendix
\section{Selection Rules}
\label{sec:selection-rule}

We present sufficient (\refLem{lem:path-conditional-measure}) and necessary (\refLem{lem:selection-rule-necessary}) conditions for a selection rule to satisfy attribution soundness. Given a language model $Q: \{0,1\}^* \to [0,1]$, a non-empty string $y \in \{0,1\}^*$, and a context $x \in \{0,1\}^*$, we define the \emph{path-conditional measure} $Q(y \mid x)$ as the probability of autoregressively generating $y$ under the conditional distribution $Q(\,\cdot \mid x)$. Formally,
\begin{align*}
    Q(y \mid x) = \prod_{j = 1}^{\len(y)} Q(y_j \mid xy_{1:j-1})\;.
\end{align*}

\refCor{cor:path-conditional-measure} shows that any selection rule which only selects text blocks with low path-conditional measure satisfies soundness. We note that the $\beta$-potential selection rule (\refDef{def:predictive-potential}) is stricter than the path-conditional measure selection rule. In other words, low predictive potential implies low path-conditional measure.

\begin{lemma}[Path-conditional measure selection rule]
\label{lem:path-conditional-measure}
Let $\alpha > 0$ be a parameter and define the path-conditional measure selection rule by
\begin{align*}
    \sZ(x,\rho,u\;; Q) = \one\left[Q(u\mid x\rho) \le 2^{-\alpha} \right]\;.
\end{align*}

Then, for any $Q$, any fixed string $y \in \{0, 1\}^*$, and any adversary observing length $T$ responses,
\begin{align*}
    \Pr\left[\Attr(y) = 1\right] \le T \cdot 2^{-\alpha}\;.
\end{align*}
\end{lemma}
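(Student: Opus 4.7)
\textbf{The plan} is to union bound over positions at which $y$ could be added to the attribution set $\trAtt(\Pi)$. By the suffix-jump property in~\refDef{def:transcript-attribution}, each potential addition of $y$ corresponds to a pair $(i,k)$ consisting of a transcript index $i$ and a starting position $k$ within that transcript: $y$ is added at $(i,k)$ when $u^{(i)}_{k:k+|y|-1} = y$ and the rule fires, i.e., $\sZ(x^{(i)}, u^{(i)}_{1:k-1}, y\,;Q) = 1$. Since the adversary observes at most $T$ response tokens in total, the number of such candidate positions is at most $T$.

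\textbf{For a fixed position} $(i,k)$, let $\mathcal{F}_{i,k}$ denote the interaction history at the moment the adversary has committed to the prompt $x^{(i)}$ and the prefix $u^{(i)}_{1:k-1}$, but before sampling the next $|y|$ response tokens. Conditional on $\mathcal{F}_{i,k}$, the selection criterion $Q(y \mid x^{(i)} u^{(i)}_{1:k-1}) \le 2^{-\alpha}$ is determined. If it fails, the rule cannot fire at this position and the contribution is zero. Otherwise, the rule fires precisely when autoregressive sampling produces $y$ over the next $|y|$ steps, an event whose conditional probability equals $Q(y \mid x^{(i)} u^{(i)}_{1:k-1}) \le 2^{-\alpha}$ by the chain rule that defines the path-conditional measure. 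In either case, the conditional probability that $y$ is added to $\trAtt(\Pi)$ at position $(i,k)$ is at most $2^{-\alpha}$, and the same bound holds unconditionally after taking expectations.

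\textbf{Summing via a union bound} over the at most $T$ candidate positions yields $\Pr[\Attr(y)=1] \le T \cdot 2^{-\alpha}$, which is the claimed bound. The only subtle point is that the adversary is \emph{adaptive}, so the prompts and prefixes at subsequent positions depend on previously sampled tokens. This is handled cleanly by carrying out the two-case split under $\mathcal{F}_{i,k}$ at each position, which decouples the selection criterion from the randomness of the next $|y|$ tokens. Beyond this conditioning, no concentration machinery is needed, and the argument reduces to an elementary event-by-event union bound.
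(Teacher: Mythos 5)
Your proof is correct and follows essentially the same route as the paper's: decompose $\{\Attr(y)=1\}$ into at most $T$ per-position entry events, condition on the history up to the start of the candidate occurrence so that the selection criterion is determined and the conditional probability of generating $y$ is at most $2^{-\alpha}$, then union bound. The only difference is presentational — you handle the adaptive, multi-transcript setting explicitly via the filtration $\mathcal{F}_{i,k}$, whereas the paper argues the single-transcript case by summing over prefixes $\rho$ and notes the general case follows identically.
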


\begin{proof}
We prove the bound for a single transcript. The general case follows by the same argument. 
Let $y \in \{0,1\}^n$ be the fixed string. Define $E_t$ as the event string $y$ enters the attribution set at time $t$. That is,
\begin{align*}
    E_t &= \one[y \in \trAtt(\pi_t) \setminus \trAtt(\pi_{t-1})] \\
    &= (u_{t-n+1:t} = y)\;\wedge\; (\sZ(x,u_{1:t-n},y) = 1)
\end{align*}

Consider $t \ge n$. For any $\rho \in \{0,1\}^{t-n}$ such that $\sZ(x,u_{1:t-n},y) = 1$, we have $Q(E_t \mid x\rho) = Q(y \mid x\rho) \le 2^{-\alpha}$. On the other hand, for any $\rho \in \{0,1\}^{t-n}$ such that $\sZ(x,u_{1:t-n},y) = 0$, we have $Q(E_t \mid x\rho) = 0$. Thus,
\begin{align*}
    \Pr[E_t] = Q(E_t \mid x) 
    &= \sum_{\rho \in \{0,1\}^{t-n}} Q(E_t \mid x \rho)\cdot Q(\rho \mid x) \\
    &\le 2^{-\alpha} \sum_{\rho \in \{0,1\}^{t-n}} Q(\rho \mid x) \\
    &\le 2^{-\alpha}\;.
\end{align*}
Applying the union bound over $t \in [T]$ gives
\begin{align*}
    \Pr\left[ \Attr(y) = 1\right]
    = \Pr\left[\bigvee_{t \le T} E_t\right] \le \sum_{t=1}^T \Pr[E_t] \le T \cdot 2^{-\alpha}\;.
\end{align*}
\end{proof}

\begin{corollary}[Sufficient condition for attribution soundness]
\label{cor:path-conditional-measure}
    Let $\lambda \in \NN$ be the security parameter. For any language model $Q: \{0,1\}^* \to [0,1]$, any fixed sequence of strings $(y_\lambda)$, and any $T = \poly(\lambda)$, the attribution set $\trAtt_\lambda(\Pi)$ induced by ledger $\Pi$ of length at most $T$, and the path-conditional measure selection rule $(\sZ_\lambda)$ with parameter $\alpha = \omega(\log \lambda)$ satisfies 
    \begin{align*}
        \Pr\left[\Pi \leftarrow \cB^{Q,\Response}(1^\lambda, y_\lambda)\;:\;y_\lambda \in \trAtt_\lambda(\Pi)\right] \le \negl(\lambda)\;.
    \end{align*}
\end{corollary}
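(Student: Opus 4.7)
The plan is to derive the corollary as a near-immediate consequence of Lemma~\ref{lem:path-conditional-measure}, with the only real work being to lift the single-transcript / fixed-length bound to the adaptive, multi-transcript adversarial setting of $\cB$. First I would unpack what it means for $y_\lambda$ to belong to $\trAtt_\lambda(\Pi)$: by the suffix-jump property (\refDef{def:transcript-attribution}) together with the induced form of attribution maps from selection rules (\refProp{prop:selection-attribution-equiv}), $y_\lambda \in \trAtt_\lambda(\Pi)$ if and only if there exists some global time $t = (i,j)$ with $j \ge n := \len(y_\lambda)$ such that $u^{(i)}_{j-n+1:j} = y_\lambda$ and $\sZ_\lambda(x^{(i)}, u^{(i)}_{1:j-n}, y_\lambda) = 1$. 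Thus the probability to bound is exactly the probability that such a ``planting event'' $E_t$ occurs at some time in $\cB$'s execution.

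Next I would fix the enumeration of candidate times. Since the ledger has total length at most $T = \poly(\lambda)$, there are at most $T$ time indices $t$ at which $E_t$ can fire. For each such $t$, I would argue exactly as in the proof of Lemma~\ref{lem:path-conditional-measure}: condition on the entire interaction history of $\cB$ up to the moment immediately before the $n$ tokens $u^{(i)}_{j-n+1:j}$ are sampled. In that history, the prompt $x^{(i)}$ and prefix $u^{(i)}_{1:j-n}$ are determined (possibly adaptively by $\cB$), but the next $n$ response tokens are drawn from $\bar Q(\cdot\mid x^{(i)} u^{(i)}_{1:j-n})$ independently of $\cB$'s future randomness. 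The selection rule fires only when $Q(y_\lambda \mid x^{(i)} u^{(i)}_{1:j-n}) \le 2^{-\alpha}$, which directly caps the conditional probability that the sampled block equals $y_\lambda$ and is selected. Taking expectation over the prior history preserves the bound $\Pr[E_t] \le 2^{-\alpha}$.

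Finally, a union bound over the at most $T$ candidate times yields
\begin{align*}
\Pr\!\left[y_\lambda \in \trAtt_\lambda(\Pi)\right] \;\le\; T \cdot 2^{-\alpha} \;=\; \poly(\lambda) \cdot 2^{-\omega(\log \lambda)} \;=\; \negl(\lambda)\;,
\end{align*}
which is the desired conclusion. The only mildly subtle step is the adaptivity handling in the middle paragraph: one must verify that $\cB$'s choice of prompt and its decisions based on $Q$-oracle answers do not break the per-step conditional bound. This follows cleanly from the tower property once one fixes the filtration generated by $\cB$'s view just before each candidate sampling window, so I do not expect any genuine obstacle beyond careful bookkeeping.
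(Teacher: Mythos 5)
Your proposal is correct and follows essentially the same route as the paper: the paper obtains the corollary directly from Lemma~\ref{lem:path-conditional-measure} (per-time conditional bound of $2^{-\alpha}$ via the selection rule, union bound over at most $T$ times, then $T\cdot 2^{-\alpha}=\poly(\lambda)\cdot 2^{-\omega(\log\lambda)}=\negl(\lambda)$). The only difference is that you spell out the adaptive, multi-transcript conditioning that the paper's lemma proof compresses into ``the general case follows by the same argument,'' which is a faithful elaboration rather than a new argument.
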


\refLem{lem:selection-rule-necessary} establishes necessary conditions for selection rule soundness. It states that if a selection rule accepts (suffixes of) response strings with non-negligible path-conditional measure, then the corresponding attribution mechanism violates soundness.

\begin{lemma}[Necessity of rejecting non-negligible blocks]
\label{lem:selection-rule-necessary}
Let $\sZ = (\sZ_\lambda)$ be any sequence of selection rules, where each 
$\sZ_\lambda$ is a deterministic function of the prompt 
$x \in \{0,1\}^*$, response prefix $\rho \in \{0,1\}^*$, response suffix 
$u \in \{0,1\}^*$, and the language model $Q : \{0,1\}^* \to [0,1]$. 

Let $\eta \ge 1/\poly(\lambda)$ be a threshold parameter and 
$\ell = \poly(\lambda)$.  
Suppose there exist a language model $Q$, and sequences of prompts $(x_\lambda)$ and responses $(u_\lambda)$ of length at most $\ell$ such that
\begin{align*}
    Q(u_\lambda \mid x_\lambda) &\ge \eta 
    \quad\text{and}\quad 
    \sZ(x_\lambda, \rho_\lambda, y_\lambda; Q) = 1\;,
\end{align*}
where $u_\lambda = \rho_\lambda y_\lambda$, then the selection rule $\sZ$ fails to satisfy attribution soundness 
(\refDef{def:attribution-soundness}).
\end{lemma}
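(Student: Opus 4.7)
The plan is to establish the contrapositive concretely: use the hypothesized triple $(x_\lambda,\rho_\lambda,y_\lambda)$ to construct an explicit attacker $\cB$ that, for the fixed string $y_\lambda$, inserts $y_\lambda$ into the attribution set with non-negligible probability. Take $y_\lambda$ itself as the fixed target string, and let $\cB$ be the (non-uniform PPT) adversary that, on input $(1^\lambda,y_\lambda)$, hardcodes $x_\lambda$ as advice and issues a single sampling query to $\bar{Q}(x_\lambda)$, receiving a length-$\ell$ response $u$ and halting with ledger $\Pi=((x_\lambda,u))$. Since $|x_\lambda|$ may be assumed polynomial in $\lambda$ (otherwise no PPT adversary could even be discussed) and only one query is made, $\cB$ satisfies the resource constraints of Definition~\ref{def:attribution-soundness}.

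Next, analyze the success probability. Autoregressive sampling from $\bar{Q}(x_\lambda)$ produces the prefix $u_\lambda=\rho_\lambda y_\lambda$ with probability exactly $Q(u_\lambda\mid x_\lambda)\ge \eta$. Condition on this event and invoke the explicit construction of $\trAtt$ from Proposition~\ref{prop:selection-attribution-equiv}: at time $j=|\rho_\lambda|+|y_\lambda|$ the response prefix equals $\rho_\lambda y_\lambda$, and taking $k=|\rho_\lambda|+1$ in the inductive formula yields the suffix $u_{k:j}=y_\lambda$ with context $u_{1:k-1}=\rho_\lambda$. By hypothesis $\sZ(x_\lambda,\rho_\lambda,y_\lambda;Q)=1$, so $y_\lambda$ is added to $\trAtt(x_\lambda,u_{1:j})$ at this step. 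The monotonicity axiom (axiom~2 of Definition~\ref{def:transcript-attribution}) guarantees that $y_\lambda$ remains in $\trAtt(x_\lambda,u)$ regardless of how the length-$\ell$ response is completed after position $j$, and hence $y_\lambda\in\trAtt_\lambda(\Pi)$, i.e., $\Attr^\Pi_\lambda(y_\lambda)=1$.

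Putting the two steps together gives
\begin{align*}
    \Pr\left[\Pi\leftarrow\cB^{\bar{Q}}(1^\lambda,y_\lambda):\Attr^\Pi_\lambda(y_\lambda)=1\right]\;\ge\;\eta\;\ge\;\frac{1}{\poly(\lambda)}\;,
\end{align*}
which is non-negligible in $\lambda$ and therefore directly contradicts Definition~\ref{def:attribution-soundness}. Hence the selection rule $\sZ$ fails attribution soundness.

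There is no real technical obstacle here; the argument is essentially a direct unpacking of the selection-rule-to-attribution correspondence. The only things requiring care are (i) aligning index arithmetic between the response-suffix $u_{k:j}$ notation and the decomposition $u_\lambda=\rho_\lambda y_\lambda$, and (ii) the modeling choice of allowing $\cB$ non-uniform advice so that it can carry $x_\lambda$; if a uniform statement is preferred, one needs to assume the sequence $(x_\lambda)$ is polynomial-time computable from $1^\lambda$, which is harmless given that $|x_\lambda|\le\poly(\lambda)$ is implicit in the setup.
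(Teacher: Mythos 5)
Your proof is correct and follows essentially the same approach as the paper: construct the single-query adversary that prompts with $x_\lambda$ and observe that the event $\rho_\lambda y_\lambda \sqsubseteq u$ occurs with probability $Q(u_\lambda \mid x_\lambda) \ge \eta$, which is non-negligible. You fill in the detail the paper leaves implicit—explicitly invoking the inductive construction of $\trAtt$ from Proposition~\ref{prop:selection-attribution-equiv} at $k = |\rho_\lambda|+1$, $j = |\rho_\lambda|+|y_\lambda|$ and then monotonicity—but the argument is identical in substance.
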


\begin{proof}
Let $Q$ be as above, and let $\Response(1^\lambda)$ denote the oracle that 
autoregressively samples length-$\ell(\lambda)$ responses from $Q$.  
Consider the adversary $\cB$ that queries $\Response(1^\lambda)$ with the prompt $x_\lambda$ and generates a single transcript $\pi = (x_\lambda, u)$ where $u \leftarrow \Response(1^\lambda, x_\lambda)$.  
Then,
\begin{align*}
    \Pr[y_\lambda \in \trAtt_\lambda(\pi)]
    &\ge \Pr[\rho_\lambda y_\lambda \sqsubseteq u] \\
    &= Q(\rho_\lambda y_\lambda \mid x_\lambda) \ge \eta\;,
\end{align*}
which is non-negligible in $\lambda$, violating attribution soundness.
\end{proof}

\section{Digital Signatures}
\label{sec:digital-sig}
For completeness, we recall the definition of digital signature schemes. Digital signature schemes allow a signer to associate a message with its publicly-verifiable signature. Parties without the signing key cannot forge valid signatures on messages the signer has not signed. For clarity, we define them with respect to a fixed message size $n$. If no message size is specified, we assume the message space is $\{0,1\}^*$, since longer messages can be handled by first applying a collision-resistant hash function to compress them into $n$ bits before signing.

\begin{definition}[Digital signature scheme] A digital signature scheme with message size $n = n(\lambda)$ and signature size $k = k(\lambda)$ is a triple $(\Gen, \Sign, \Ver)$ of PPT algorithms such that for all $\lambda \in \mathbb{N}$,
\begin{itemize}
    \item $\Gen(1^\lambda):$ generates a pair of public and secret keys $(\pk, \sk)$.
    \item $\Sign(\sk, y):$ takes in a message $y \in \{0, 1\}^n$, and outputs a signature $\sigma \in \{0, 1\}^k$.
    \item $\Ver(\pk, y, \sigma):$ takes in $y \in \{0, 1\}^n$, $\sigma \in \{0, 1\}^k$, and outputs a binary decision.
\end{itemize}
\begin{description}
    \item[Correctness:] For any $y \in \{0, 1\}^n$,
    \begin{align*}
        \Pr\left[
        \begin{array}{rl}
            (\pk, \sk) &\leftarrow \Gen(1^\lambda)  \\
            \sigma &\leftarrow \Sign(\sk,y)
        \end{array}
        :
        \begin{array}{rl}
            \Ver(\pk, y, \sigma) = 1
        \end{array}
        \right] = 1\;,
    \end{align*}
    \item[Unforgeability:] For any PPT adversary $\mathcal{A}$,
    \begin{align*}
        \Pr\left[
        \begin{array}{rl}
            (\pk, \sk) &\leftarrow \Gen(1^\lambda) \\ 
            (y, \sigma) &\leftarrow \cA^{\Sign_{\sk}(\cdot)}(1^\lambda, \pk)
        \end{array} 
        :
        \begin{array}{rl}
           \big(\Ver(\pk, y, \sigma) = 1\big) \; \wedge \; \big(y \notin \Pi\big)
        \end{array}
        \right] \leq \negl(\lambda)\;,
    \end{align*}
    where $\Pi$ is the ledger of the interaction between $\cA$ and the signing oracle $\Sign_{\sk}(\cdot)$ containing only message blocks.
\end{description}
\end{definition}

\end{document}